\documentclass[]{elsarticle}

\usepackage{hyperref}
\usepackage{amsfonts}
\usepackage{amssymb}
\usepackage{amsmath}
\usepackage{multicol}
\usepackage{amsthm}
\usepackage{comment}
\usepackage{placeins}
\usepackage{appendix}
\usepackage{subcaption}
\usepackage{xcolor}
\usepackage{algorithm}
\usepackage{algorithmic} 
\usepackage{hyperref}
\include{pythonlisting}

\journal{Journal}










\begin{document}

\newcommand{\theoremName}{Theorem}
\newcommand{\corollaryName}{Corollary}
\newcommand{\propositionName}{Proposition}
\newcommand{\lemmaName}{Lemma}
\newcommand{\propertyName}{Property}
\newcommand{\definitionName}{Definition}
\newcommand{\axiomName}{Axiom}
\newcommand{\remarkName}{Remark}
\newcommand{\exampleName}{Example}

\newtheorem{theorem}{\theoremName}[section]
\newtheorem{corollary}[theorem]{\corollaryName}
\newtheorem{proposition}[theorem]{\propositionName}
\newtheorem{lemma}[theorem]{\lemmaName}

\newtheorem{property}{\propertyName}[section]
\newtheorem{definition}{\definitionName}[section]
\newtheorem{axiom}{\axiomName}[section]
\newtheorem{remark}{\remarkName}[section]
\newtheorem{example}{\exampleName}[section]

\begin{frontmatter}

\title{Efficient estimation of multiple expectations with the same sample by adaptive importance sampling and control variates}


\author[1,2]{Julien \textsc{Demange-Chryst}\corref{mycorrespondingauthor}}
\cortext[mycorrespondingauthor]{Corresponding author}
\ead{julien.demange-chryst@onera.fr}

\author[2]{François \textsc{Bachoc}}

\ead{francois.bachoc@math.univ-toulouse.fr}

\author[1]{Jérôme \textsc{Morio}}
\ead{jerome.morio@onera.fr}

\address[1]{ONERA/DTIS, Université de Toulouse, F-31055 Toulouse, France}
\address[2]{Institut de Mathématiques de Toulouse, UMR5219 CNRS, 31062 Toulouse, France}

\begin{abstract}
Some classical uncertainty quantification problems require the estimation of multiple expectations. Estimating all of them accurately is crucial and can have a major impact on the analysis to perform, and standard existing Monte Carlo methods can be costly to do so. We propose here a new procedure based on importance sampling and control variates for estimating more efficiently multiple expectations with the same sample. We first show that there exists a family of optimal estimators combining both importance sampling and control variates, which however cannot be used in practice because they require the knowledge of the values of the expectations to estimate. Motivated by the form of these optimal estimators and some interesting properties, we therefore propose an adaptive algorithm. The general idea is to adaptively update the parameters of the estimators for approaching the optimal ones. We suggest then a quantitative stopping criterion that exploits the trade-off between approaching these optimal parameters and having a sufficient budget left. This left budget is then used to draw a new independent sample from the final sampling distribution, allowing to get unbiased estimators of the expectations. We show how to apply our procedure to sensitivity analysis, by estimating Sobol' indices and quantifying the impact of the input distributions. Finally, realistic test cases show the practical interest of the proposed algorithm, and its significant improvement over estimating the expectations separately.
\end{abstract}

\begin{keyword}
Multiple expectation estimation, Importance sampling, Control variates, Variance reduction, Global sensitivity analysis
\end{keyword}

\end{frontmatter}

\section{Introduction}


Some classical uncertainty quantification problems require the estimation of multiple expectations, and estimating all of them accurately is crucial. The generalized method of moments \cite{hansen1982large}, which is massively used in finance for example \cite{jagannathan2002generalized}, is a common illustration of a such problem. Another classical illustration of this problematic is global sensitivity analysis \cite{saltelli2004sensitivity}, which aims at studying the impact of the input variables on the output behaviour of a computer model. Performing a such study consists in estimating some sensitivity indices associated to each input variable, such as the Sobol' indices \cite{sobol1993sensitivity} or the Shapley effects \cite{owen2014sobol} for example, and requires in each case the estimation of multiple expectations.




The usual quadrature methods \cite{davis2007methods} tend not to be appropriate in these uncertainty quantification contexts, as the expectations then involve a numerical model which computational cost is usually high (from several minutes to several days CPU), and which number of input variables is not small. Standard existing Monte Carlo methods \cite{rubinstein2016simulation} for estimating multiple expectations consist in drawing a unique sample according to a given input distribution and to estimate all of them with it. However, this sample can be ill-suited for estimating accurately some of the expectations, so having accurate estimations of all of them can be costly with this method. As a consequence, the resulting error can have a major impact on the final goal of the analysis, as illustrated in our numerical experiments in Section~\ref{sec:numerical_results}. Importance sampling \cite{Kahn1951SplittingParticleTransmission} and control variates \cite{nelson1987control} are two well-known and deeply investigated variance-reduction techniques for improving the estimation of a single expectation. However, to the best of our knowledge, these methods have not been adapted for jointly estimating multiple expectations with the same sample.

 
In this article, we first propose a criterion to quantify the quality of the common estimation of multiple expectations with the same sample. We show then that there exists a family of optimal estimators combining both importance sampling and control variates. However, these optimal estimators cannot be used in practice because they require the knowledge of the values of the expectations to estimate. Motivated by the form of these optimal estimators and some interesting properties \cite{owen2000safe,he2014optimal}, we therefore propose an adaptive algorithm called ME-aISCV combining both importance sampling and control variates for estimating multiple expectations with the same sample. Not only can we address different functions across the expectations, but also different input distributions. In the same way as other adaptive algorithms \cite{cornuet2012adaptive,marin2012consistency}, the general idea is to sequentially update the parameters of the estimators for approaching the optimal ones until a stopping criterion is reached. We suggest a quantitative stopping criterion that exploits the trade-off between approaching these optimal parameters and having a sufficient budget left. At last, the left budget is used to draw a new independent sample according to the final sampling distribution which allows to get unbiased estimators of the expectations to estimate.


The remainder of this paper is organized as follows. First, Section \ref{sec:review} formally presents the problem and provides a review on importance sampling and control variates. Then, Section \ref{sec:new_algorithm} introduces and describes the proposed ME-aISCV algorithm for estimating multiple expectations with the same sample. In addition, Section \ref{sec:numerical_results} illustrates the practical interest of this new algorithm on the estimation of several moments of the standard Gaussian distribution. It then shows that the ME-aISCV algorithm can be applied to the estimation of first order Sobol' indices and to sensitivity analysis w.r.t. parameters of the input distribution. Both applications are illustrated on a real structural engineering example: the cantilever beam problem. In all cases, the improvement of our methodology over estimating the expectations separately is significant. Finally, Section \ref{sec:conclusion} concludes the present article and gives future research perspectives stemming from it.

\section{Exposition of the problem and review on variance-reduction methods}
\label{sec:review}

In this section, we first expose the problem of estimating multiple expectations with the same sample and we recall the main principles of importance sampling and control variates to address it.

First of all, let us begin by introducing the notations that will be used throughout the paper. For any probability density $h$ from the input domain $\mathbb{X} = \bigotimes_{i=1}^d \mathbb{X}_i \subseteq \mathbb{R}^d$ to $\mathbb{R}_+$, we let $\mathbb{E}_h$ and $\mathbb{V}_h$ denote respectively the expectation and the variance operators of a random variable distributed according to $h$. Then, for $J\geq2$, we consider a family of non-negative functions $\left(\phi_j\right)_{j\in[\![1,J]\!]}$ from $\mathbb{X}$ to $\mathbb{R}_+$. Moreover, for any $j\in[\![1,J]\!]$, the random input vector $\mathbf{X} = \left(X_1,\dots,X_d\right)$ of the function $\phi_j$ on $\mathbb{X}$ follows the distribution of joint PDF $f_{j}$. No regularity assumption on the functions is required, but the random output of each function is supposed to be integrable, i.e. $\mathbb{E}_{f_j}\left(\phi_j\left(\mathbf{X}\right)\right) < + \infty$.

\subsection{Estimating multiple expectations with the same sample}
\label{ssec:presentation_problem}

As discussed and motivated in the introduction, the main goal of this article is to efficiently estimate multiple expectations while minimising the number of calls to the functions $\left(\phi_j\right)_{j\in[\![1,J]\!]}$ using a unique $N$-sample. More precisely, the family of expectations to estimate is $\left(I_j = \mathbb{E}_{f_j}\left[\phi_j\left(\mathbf{X}\right)\right]\right)_{j\in[\![1,J]\!]}$, the $N$-sample is $\left(\mathbf{X}^{(n)}\right)_{n\in[\![1,N]\!]}$ and it is drawn from a distribution of PDF $g$.

In practice, two specific cases can occur: \begin{itemize}
    \item Case 1: estimating the expectation of $J$ different functions under the same input distribution, or formally $\forall i,j\in [\![1,J]\!]$, $i\neq j \Longrightarrow \phi_i \neq \phi_j$ and $\forall j\in [\![1,J]\!]$, $f_j = f$, see Section \ref{ssec:sobol_estimation} for a numerical example,
    \item Case 2: estimating the expectation of the same function $\phi$ under $J$ different input distributions, or formally $\forall j\in [\![1,J]\!]$, $\phi_j = \phi$ and $\forall i,j\in [\![1,J]\!]$, $i\neq j \Longrightarrow f_i \neq f_j$, see Section \ref{ssec:sensitivity_wrt_input_params} for a numerical example.
\end{itemize}

The quality of the estimation of one expectation can be evaluated with the variance for unbiased estimators. When estimating $J$ expectations, a natural criterion is the weighted sum of the individual variance of each estimator, which is briefly mentioned in \cite{he2014optimal}. To define this criterion, let us consider a family of positive weights $\left(w_j\right)_{j\in[\![1,J]\!]}\in\mathbb{R}_+^J$. Then, for any $j\in[\![1,J]\!]$, let us denote $\widehat{I}_j$ an estimator of the expectation $I_j$ such that all the estimators  $\widehat{I}_1,\dots,\widehat{I}_J$ are based on the same $N$-sample distributed according to $g$. The criterion we want to minimize is:
\begin{equation}\label{eq:criterion}
    \sum_{j=1}^J w_j\mathbb{V}_g\left(\widehat{I}_j\right).
\end{equation} The positive weights $\left(w_j\right)_{j\in[\![1,J]\!]}$ can be used to adjust the importance given to each expectation to estimate.

\subsection{Importance sampling}
\label{ssec:importance_sampling}

\subsubsection{General presentation}
\label{ssec:general_presentation}

\textit{Importance sampling} (IS) is a very usual variance-reduction technique which was introduced in \cite{Kahn1951SplittingParticleTransmission}. In the case of the estimation of an expectation $I = \mathbb{E}_{f}\left(\phi\left(\mathbf{X}\right)\right)$, it consists in rewriting the expectation according to an auxiliary density $g : \mathbb{X} \longrightarrow \mathbb{R}_+$  as $\mathbb{E}_{g}\left(\phi\left(\mathbf{X}\right)w^g\left(\mathbf{X}\right)\right)$, where $w^g\left(\mathbf{x}\right) = f(\mathbf{x})/g(\mathbf{x})$ is the \textit{likelihood ratio}. To get an unbiased estimate, the support of $g$ must contain the support of $\mathbf{x}\in\mathbb{X} \mapsto \phi\left(\mathbf{x}\right)f(\mathbf{x)}$. The corresponding estimator is then given by: \begin{equation}\label{ptisest}\widehat{I}_{g,N}^{\text{IS}} = \frac{1}{N}\sum_{n=1}^N \phi\left(\mathbf{X}^{(n)}\right)w^g\left(\mathbf{X}^{(n)}\right),\end{equation} where $\left(\mathbf{X}^{(n)}\right)_{n\in [\![1,N]\!]}$ is an i.i.d. sample distributed according to the IS auxiliary distribution $g$. It is consistent and unbiased, and it has zero-variance if and only if $g = g^*$ with $\forall \mathbf{x} \in \mathbb{X}$, $g^*\left(\mathbf{x}\right) \propto \phi\left(\mathbf{x}\right)f(\mathbf{x)}$ \cite{bucklew2004introduction} on the condition that $\phi$ is non-negative. This optimal density cannot be used in practice because the normalizing constant is $I$, which is the quantity to estimate, but many techniques exist to approach $g^*$ by a near-optimal auxiliary density: non-parametric methods \cite{zhang1996nonparametric} or parametric methods such that the cross-entropy method \cite{de2005tutorial,rubinstein2013cross}.

\subsubsection{The cross-entropy method}

In this article, we will seek an approximation of $g^*$ in parametric families of distribution $\mathcal{D}_{\Lambda} = \left\lbrace g_{\boldsymbol{\lambda}} ; \boldsymbol{\lambda}\in\Lambda\right\rbrace$. As a first option, one could aim for the parameter $\boldsymbol{\lambda}^*_{\mathbb{V}}$ which minimizes the variance of the estimator: \begin{equation}\label{eq:variance_minimization}
    \boldsymbol{\lambda}^*_{\mathbb{V}} = \underset{\boldsymbol{\lambda}\in\Lambda}{\mathrm{argmin}} \ \mathbb{V}_{g_{\boldsymbol{\lambda}}}\left(\widehat{I}_{g_{\boldsymbol{\lambda}},N}^{\text{IS}}\right).
\end{equation} However, this optimisation problem is not convex w.r.t. $\boldsymbol{\lambda}\in\Lambda$, does not have an analytical solution and needs to be solved numerically \cite{rubinstein2016simulation}, even for classical families $\mathcal{D}_{\Lambda}$ (like the Gaussian family defined below), which can be extremely costly. Therefore, one typically prefers to use the cross-entropy method. It consists in minimizing the Kullback-Leibler divergence \cite{kullback1951information} between $g^*$ and $g_{\boldsymbol{\lambda}}$ for $\boldsymbol{\lambda}\in\Lambda$ in order to find the best representative of $g^*$ in $\mathcal{D}_{\Lambda}$. The Kullback-Leibler divergence between two distributions of PDF $g_1$ and $g_2$ is given by: \begin{equation}
    D_{\text{KL}}\left(g_1,g_2\right) = \mathbb{E}_{g_1}\left(\log\left(\dfrac{g_1\left(\mathbf{X}\right)}{g_2\left(\mathbf{X}\right)}\right)\right) = \int_{\mathbb{X}}\log\left(\dfrac{g_1\left(\mathbf{x}\right)}{g_2\left(\mathbf{x}\right)}\right)g_1\left(\mathbf{x}\right)d\mathbf{x}.
\end{equation} The quantity $D_{\text{KL}}\left(g_1,g_2\right)$ is always non-negative and is zero if and only if $g_1 = g_2$ almost everywhere. It measures the gap between two distributions, even if it is not a distance because it is not symmetric. The cross-entropy method consists then in finding the solution $\boldsymbol{\lambda}^*$ of the optimization problem: \begin{equation}\label{eq:cross_entropy_min}
    \boldsymbol{\lambda}^* = \underset{\boldsymbol{\lambda}\in\Lambda}{\mathrm{argmin}} \  D_{\text{KL}}\left(g^*,g_{\boldsymbol{\lambda}}\right).
\end{equation} Under this form, this optimization cannot be solved because it depends explicitly on $g^*$ which is unknown. However, it can be shown \cite{de2005tutorial} that the optimization problem in \eqref{eq:cross_entropy_min} is equivalent to solve: \begin{equation}\label{eq:cross-entropy_max}
    \boldsymbol{\lambda}^* = \underset{\boldsymbol{\lambda}\in\Lambda}{\mathrm{argmax}} \  \mathbb{E}_f\left[\log\left(g_{\boldsymbol{\lambda}}\left(\mathbf{X}\right)\right)\phi\left(\mathbf{X}\right)\right].
\end{equation} In opposition to the variance-minimization problem in \eqref{eq:variance_minimization}, the cross-entropy problem in \eqref{eq:cross-entropy_max} is generally concave and differentiable w.r.t. $\boldsymbol{\lambda}\in\Lambda$ \cite{rubinstein2013cross}. Another significant advantage of the problem in \eqref{eq:cross-entropy_max} is that it has an analytical solution when $\mathcal{D}_{\Lambda}$ belongs to the exponential family of distributions \cite{rubinstein2013cross}.

\subsubsection{Classical families of distributions for the auxiliary distribution}

One of the most famous family of distributions is the Gaussian family $\mathcal{D}_{\text{Gauss}} = \left\lbrace g_{\boldsymbol{m},\boldsymbol{\Sigma}} ;\boldsymbol{m}\in\mathbb{R}^d,\boldsymbol{\Sigma}\in \mathcal{S}_d^+\right\rbrace$, which belongs to the exponential family. Each Gaussian distribution is fully determined by $\boldsymbol{\lambda} = \left(\boldsymbol{m},\boldsymbol{\Sigma}\right)$, with $\boldsymbol{m}\in\mathbb{R}^d$ the mean vector and $\boldsymbol{\Sigma}\in \mathcal{S}_d^+$ the covariance matrix, where $\mathcal{S}_d^+$ denotes the set of symmetric positive-definite real-valued matrices of size $d\times d$. This family is well-suited when $g^*$ is unimodal. Since $\mathcal{D}_{\text{Gauss}}$ belongs to the exponential family, the cross-entropy problem in \eqref{eq:cross-entropy_max} has an analytical solution and it is given by $\boldsymbol{\lambda}^* = \left(\boldsymbol{m}^*,\boldsymbol{\Sigma}^*\right)$: \begin{equation}
    \boldsymbol{m}^* = \dfrac{\mathbb{E}_f\left[\phi\left(\mathbf{X}\right)\mathbf{X}\right]}{\mathbb{E}_f\left[\phi\left(\mathbf{X}\right)\right]} \mbox{ and } \boldsymbol{\Sigma}^* = \dfrac{\mathbb{E}_f\left[\phi\left(\mathbf{X}\right)\left(\mathbf{X}-\boldsymbol{m}^*\right)\left(\mathbf{X}-\boldsymbol{m}^*\right)^\top\right]}{\mathbb{E}_f\left[\phi\left(\mathbf{X}\right)\right]}. 
\end{equation} In practice, these optimal parameters are estimated with a sample, which is called the stochastic counterpart \cite{de2005tutorial}.

The optimal density $g^*$ can also be multimodal. In that case, a well-suited family of distributions is the Gaussian mixture family \cite{kurtz2013cross}. Let us first define, for any $K\geq1$, the set of convex combinations of size $K$:  \begin{equation}
   S_K = \left\lbrace \left(\alpha_j\right)_{j\in[\![1,K]\!]} ; \sum_{k=1}^K \alpha_k = 1 \mbox{ and } \forall k \in[\![1,K]\!], \alpha_k\geq0\right\rbrace.
\end{equation} Then, the Gaussian mixture family with $K\geq1$ components is given by $\mathcal{D}_{\text{Mix}}^{(K)} = \left\lbrace \sum_{k=1}^K \alpha_k g_{\boldsymbol{m}_k,\boldsymbol{\Sigma}_k} ;\left(\boldsymbol{m}_k\right)_{k\in[\![1,K]\!]}\in\left(\mathbb{R}^d\right)^K,\left(\boldsymbol{\Sigma}_k\right)_{k\in[\![1,K]\!]}\in \left(\mathcal{S}_d^+\right)^{K},\left(\alpha_k\right)_{k\in[\![1,K]\!]}\in S_K\right\rbrace$. The Gaussian mixture family does not belong to the exponential family, but since solving the cross-entropy problem is equivalent to obtaining the maximum likelihood estimate of the parameters \cite{rubinstein2016simulation}, it is possible to use the Expectation-Maximisation algorithm \cite{dempster1977maximum} to estimate them efficiently thanks to the procedure described in \cite{chen2010demystified,geyer2019cross}.

\subsection{Control variates}
\label{ssec:control_variates}

\subsubsection{General presentation}

\textit{Control variates} (CV) is another variance-reduction technique \cite{nelson1987control}. It consists in exploiting known values of some integrals of control functions in order to improve the quality of the estimation of an expectation. CV has been first defined as a straightforward extension of the Monte Carlo estimate of the expectation \cite{nelson1987control,nelson1990control}, but it can be paired with IS \cite{owen1999adaptive,owen2000safe}. For the sake of conciseness, we will describe CV with only one control function, but it can be easily generalized to the case of multiple control functions. More precisely, let us consider a control function $h : \mathbb{X} \longrightarrow \mathbb{R}$ such that $\int_{\mathbb{X}}h\left(\mathbf{x}\right)d\mathbf{x} = \theta \in \mathbb{R}$ is known, and a real value $\beta \in \mathbb{R}$ called control parameter. Then, \begin{equation}\label{eq:control_variates_estimator}
    \widehat{I}_{g,h,\beta,N}^{\text{CV}} = \dfrac{1}{N}\sum_{n=1}^N\dfrac{\phi\left(\mathbf{X}^{(n)}\right)f\left(\mathbf{X}^{(n)}\right)-\beta h\left(\mathbf{X}^{(n)}\right)}{g\left(\mathbf{X}^{(n)}\right)} + \beta\theta,
\end{equation} where $\left(\mathbf{X}^{(n)}\right)_{[\![1,N]\!]}$ is an i.i.d. sample drawn according to $g$, is an unbiased estimator with CV and IS of $I$. Its variance is then given by: \begin{align}
    N\mathbb{V}_g&\left(\widehat{I}_{g,h,\beta,N}^{\text{CV}}\right) = \mathbb{V}_g\left(\dfrac{\phi\left(\mathbf{X}\right)f\left(\mathbf{X}\right)-\beta h\left(\mathbf{X}\right)}{g\left(\mathbf{X}\right)}\right) \label{eq:variance_control_variates}\\
    &= \mathbb{V}_g\left(\dfrac{\phi\left(\mathbf{X}\right)f\left(\mathbf{X}\right)}{g\left(\mathbf{X}\right)}\right) - 2\beta\mathrm{Cov}_g\left(\dfrac{\phi\left(\mathbf{X}\right)f\left(\mathbf{X}\right)}{g\left(\mathbf{X}\right)},\dfrac{h\left(\mathbf{X}\right)}{g\left(\mathbf{X}\right)}\right) + \beta^2\mathbb{V}_g\left(\dfrac{h\left(\mathbf{X}\right)}{g\left(\mathbf{X}\right)}\right). \label{eq:variance_control_variates_dev}
\end{align} By minimising Equation \eqref{eq:variance_control_variates_dev} according to the real parameter $\beta$, it can be shown that the optimal value of $\beta$ is: \begin{equation}\label{eq:beta_star_cv}
\beta^* = \mathbb{V}_g\left(\dfrac{h\left(\mathbf{X}\right)}{g\left(\mathbf{X}\right)}\right)^{-1}\mathrm{Cov}_g\left(\dfrac{\phi\left(\mathbf{X}\right)f\left(\mathbf{X}\right)}{g\left(\mathbf{X}\right)},\dfrac{h\left(\mathbf{X}\right)}{g\left(\mathbf{X}\right)}\right).\end{equation} This optimal value $\beta^*$ satisfies $\mathbb{V}_g\left(\widehat{I}_{g,h,\beta^*,N}^{\text{CV}}\right) \leq \mathbb{V}_g\left(\widehat{I}_{g,N}^{\text{IS}}\right)$, which means that it is possible to improve the quality of the estimation of $I$ with CV if the parameter $\beta$ is chosen carefully. In practice, the optimal parameter $\beta^*$ is estimated either directly through Equation \eqref{eq:beta_star_cv} \cite{glynn2002some} or by a least square regression by minimising Equation \eqref{eq:variance_control_variates} \cite{owen1999adaptive,leluc2021control}.

At last, note that if we use the same sample to compute an estimator $\widehat{\beta}$ of $\beta^*$ and the expectation $I$ by plugging $\widehat{\beta}$ in \eqref{eq:control_variates_estimator}, the estimator $\widehat{I}_{g,h,\widehat{\beta},N}^{\text{CV}}$ is biased. However, this bias can be eliminated if we use two different samples to compute $\widehat{\beta}$ and $\widehat{I}_{g,h,\widehat{\beta},N}^{\text{CV}}$.

\subsubsection{Mixture importance sampling with control variates}

A mixture of $K\geq1$ distributions $g_1,\dots,g_K$ is a distribution of the form $g_{\boldsymbol{\alpha}} = \sum_{k=1}^K \alpha_kg_k$, where the sequence of real numbers $\boldsymbol{\alpha} = \left(\alpha_k\right)_{k\in[\![1,K]\!]}$ belongs to $S_K$. For example, an element of the family $\mathcal{D}_{\text{Mix}}^{(K)}$ is a mixture of $K$ Gaussian distributions. The use of mixture distributions as IS auxiliary distributions without and with CV can be beneficial in order to deal with multimodal problems and satisfies as well some interesting properties \cite{owen2000safe,owen2013mcbook,he2014optimal}, some of which are described below.

Assume that for all $k\in[\![1,K]\!]$, the support of $g_k$ contains the support of $\mathbf{x}\in\mathbb{X} \mapsto \phi\left(\mathbf{x}\right)f(\mathbf{x})$. This assumption implies that for all $\boldsymbol{\alpha}\in S_K$, for all $\beta \in \mathbb{R}$ and for all $k\in[\![1,K]\!]$, the support of the mixture distribution $g_{\boldsymbol{\alpha}}$ contains the support of $\mathbf{x}\in\mathbb{X} \mapsto \phi\left(\mathbf{x}\right)f\left(\mathbf{x}\right) - \beta g_k\left(\mathbf{x}\right)$. Then, the authors of \cite{owen2000safe,he2014optimal} proved the following theorem. \begin{theorem}\label{thm:variance_cv_is} For any $k\in [\![1,K]\!]$ and $\boldsymbol{\alpha}\in S_K$, we have:
\begin{equation}
    N\mathbb{V}_{g_{\boldsymbol{\alpha}}}\left(\widehat{I}_{g_{\boldsymbol{\alpha}},g_k,\beta^*,N}^{\text{CV}}\right) = \mathbb{V}_{g_{\boldsymbol{\alpha}}}\left(\dfrac{\phi\left(\mathbf{X}\right)f\left(\mathbf{X}\right)-\beta^* g_k\left(\mathbf{X}\right)}{g_{\boldsymbol{\alpha}}\left(\mathbf{X}\right)}\right) \leq \alpha_k^{-1}\mathbb{V}_{g_k}\left(\dfrac{\phi\left(\mathbf{X}\right)f\left(\mathbf{X}\right)}{g_k\left(\mathbf{X}\right)}\right).
\end{equation}
\end{theorem}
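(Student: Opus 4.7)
The equality part of the statement is a direct specialization of the general CV-variance formula in \eqref{eq:variance_control_variates}: since each $g_k$ is a probability density, $\theta_k := \int_{\mathbb{X}} g_k(\mathbf{x})\,d\mathbf{x} = 1$, so the control-parameter-$\beta^*$ estimator based on mixture $g_{\boldsymbol{\alpha}}$ with control function $g_k$ has variance $N\mathbb{V}_{g_{\boldsymbol{\alpha}}}(\widehat{I}^{\text{CV}}) = \mathbb{V}_{g_{\boldsymbol{\alpha}}}((\phi f - \beta^* g_k)/g_{\boldsymbol{\alpha}})$. Thus the whole content of the theorem is the inequality on the right.

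The plan for the inequality is to replace the optimal $\beta^*$ by a clever but suboptimal choice. By definition, $\beta^*$ minimizes $\beta \mapsto \mathbb{V}_{g_{\boldsymbol{\alpha}}}((\phi f - \beta g_k)/g_{\boldsymbol{\alpha}})$ (this is visible from \eqref{eq:variance_control_variates_dev}), so for any $\beta \in \mathbb{R}$,
\begin{equation*}
\mathbb{V}_{g_{\boldsymbol{\alpha}}}\!\left(\frac{\phi(\mathbf{X})f(\mathbf{X}) - \beta^* g_k(\mathbf{X})}{g_{\boldsymbol{\alpha}}(\mathbf{X})}\right) \leq \mathbb{V}_{g_{\boldsymbol{\alpha}}}\!\left(\frac{\phi(\mathbf{X})f(\mathbf{X}) - \beta g_k(\mathbf{X})}{g_{\boldsymbol{\alpha}}(\mathbf{X})}\right).
\end{equation*}
The key choice is $\beta = I$. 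With this choice the integrand has zero mean under $g_{\boldsymbol{\alpha}}$ (since $\int \phi f\,d\mathbf{x}=I$ and $\int g_k\,d\mathbf{x}=1$), so the variance equals the second moment: $\mathbb{V}_{g_{\boldsymbol{\alpha}}}((\phi f - I g_k)/g_{\boldsymbol{\alpha}}) = \mathbb{E}_{g_{\boldsymbol{\alpha}}}[(\phi f - I g_k)^2/g_{\boldsymbol{\alpha}}^2] = \int (\phi f - I g_k)^2/g_{\boldsymbol{\alpha}}\,d\mathbf{x}$.

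Now I would use the pointwise domination that is the heart of the mixture trick: since $g_{\boldsymbol{\alpha}} = \sum_{k'} \alpha_{k'} g_{k'} \geq \alpha_k g_k$, we have $1/g_{\boldsymbol{\alpha}} \leq 1/(\alpha_k g_k)$ wherever $g_k > 0$ (and the integrand vanishes where $g_k = 0$ on the support of $\phi f$, by the support assumption). Hence
\begin{equation*}
\int \frac{(\phi f - I g_k)^2}{g_{\boldsymbol{\alpha}}}\,d\mathbf{x} \leq \frac{1}{\alpha_k} \int \frac{(\phi f - I g_k)^2}{g_k}\,d\mathbf{x} = \frac{1}{\alpha_k}\,\mathbb{E}_{g_k}\!\left[\left(\frac{\phi(\mathbf{X}) f(\mathbf{X})}{g_k(\mathbf{X})} - I\right)^2\right].
\end{equation*}
Finally, since $\mathbb{E}_{g_k}[\phi f/g_k] = I$, the right-hand side is exactly $\alpha_k^{-1}\mathbb{V}_{g_k}(\phi f/g_k)$, which closes the chain of inequalities.

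The only subtle step is recognizing that one should benchmark $\beta^*$ against $\beta = I$; once this is seen, everything reduces to a pointwise bound on the densities plus the identity $\mathrm{Var}(Y) = \mathbb{E}[Y^2]$ for a zero-mean $Y$. Verifying that the support hypothesis on the $g_k$'s suffices to make the integrals well defined (and to justify the pointwise domination on the relevant set) is a routine sanity check.
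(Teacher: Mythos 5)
Your proof is correct. Note that the paper itself does not prove this theorem; it only cites it from the references on safe/defensive mixture importance sampling (Owen--Zhou and He--Owen), and your argument is essentially the standard one from those sources: the equality is the generic i.i.d.-average variance identity with $\theta=\int g_k=1$, and the inequality follows by benchmarking the optimal $\beta^*$ against the suboptimal choice $\beta=I$, using that the resulting integrand is centered under $g_{\boldsymbol{\alpha}}$ so the variance is a second moment, and then applying the pointwise domination $g_{\boldsymbol{\alpha}}\geq\alpha_k g_k$ together with the support assumption. All steps are sound (the only unaddressed corner is the degenerate case $\mathbb{V}_{g_{\boldsymbol{\alpha}}}(g_k/g_{\boldsymbol{\alpha}})=0$, where $\beta^*$ in \eqref{eq:beta_star_cv} is ill-defined but the claimed bound holds trivially since $\alpha_k\leq 1$).
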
 This theorem ensures that if one component $g_{k_0}$ of the mixture distribution $g_{\boldsymbol{\alpha}} = \sum_{k=1}^K \alpha_kg_k$ is well-suited to the problem of estimating the expectation $I = \mathbb{E}_f\left(\phi\left(\mathbf{X}\right)\right)$, then the variance of the estimator $\widehat{I}_{g_{\boldsymbol{\alpha}},g_{k_0},\beta^*,N}^{\text{CV}}$ using $g_{k_0}$ as control function would be small.

Moreover, the choice of the coefficients $\boldsymbol{\alpha}\in S_K$ of the mixture $g_{\boldsymbol{\alpha}}$ can have a major impact on the variance of the CV estimator. The authors of \cite{owen2000safe,he2014optimal} proved as well the following theorem. \begin{theorem} \label{thm:convex_optim_single_expectation}
For any $\beta\in \mathbb{R}$ and $k\in[\![1,K]\!]$, the optimisation problem
\begin{equation}
    \boldsymbol{\alpha}^* = \underset{\boldsymbol{\alpha}\in S_K}{\mathrm{argmin}} \  \mathbb{V}_{g_{\boldsymbol{\alpha}}}\left(\dfrac{\phi\left(\mathbf{X}\right)f\left(\mathbf{X}\right)-\beta g_k\left(\mathbf{X}\right)}{g_{\boldsymbol{\alpha}}\left(\mathbf{X}\right)}\right)
\end{equation} is convex on $S_K$.
\end{theorem}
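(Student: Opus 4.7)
The plan is to reduce the objective to a single integral that is manifestly convex in $\boldsymbol{\alpha}$ by exploiting the decomposition of the variance into second moment minus squared mean, and by using that the mixture density $g_{\boldsymbol{\alpha}}$ is \emph{affine} in $\boldsymbol{\alpha}$.

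First, denote the numerator $\psi(\mathbf{x}) = \phi(\mathbf{x})f(\mathbf{x}) - \beta g_k(\mathbf{x})$. By the usual variance identity,
\begin{equation*}
\mathbb{V}_{g_{\boldsymbol{\alpha}}}\!\left(\frac{\psi(\mathbf{X})}{g_{\boldsymbol{\alpha}}(\mathbf{X})}\right) = \int_{\mathbb{X}} \frac{\psi(\mathbf{x})^2}{g_{\boldsymbol{\alpha}}(\mathbf{x})}\,d\mathbf{x} - \left(\int_{\mathbb{X}} \psi(\mathbf{x})\,d\mathbf{x}\right)^{\!2}.
\end{equation*}
The second term is constant in $\boldsymbol{\alpha}$: since $\int \phi f = I$ and $\int g_k = 1$, it equals $(I-\beta)^2$. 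Hence minimizing the variance over $\boldsymbol{\alpha}\in S_K$ is equivalent to minimizing
\begin{equation*}
F(\boldsymbol{\alpha}) \;=\; \int_{\mathbb{X}} \frac{\psi(\mathbf{x})^2}{g_{\boldsymbol{\alpha}}(\mathbf{x})}\,d\mathbf{x}.
\end{equation*}

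Next, I would establish convexity of $F$. For every fixed $\mathbf{x}$ in the support of $\psi$, the map $\boldsymbol{\alpha} \mapsto g_{\boldsymbol{\alpha}}(\mathbf{x}) = \sum_{k=1}^K \alpha_k g_k(\mathbf{x})$ is affine and strictly positive on $S_K$ (the support hypothesis on each $g_k$ together with $\boldsymbol{\alpha}\in S_K$ guarantees $g_{\boldsymbol{\alpha}}(\mathbf{x}) > 0$ whenever $\psi(\mathbf{x}) \neq 0$). The scalar function $t \mapsto c/t$ is convex on $(0,\infty)$ for every constant $c \geq 0$; taking $c = \psi(\mathbf{x})^2$ and composing with the affine map $\boldsymbol{\alpha}\mapsto g_{\boldsymbol{\alpha}}(\mathbf{x})$ yields a convex function of $\boldsymbol{\alpha}$. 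Convexity is preserved under integration against the (positive) Lebesgue measure on $\mathbb{X}$, so $F$ is convex on $S_K$. Since $S_K$ is itself convex (it is the probability simplex), the optimization problem is a convex one.

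The only subtlety is ensuring that the pointwise integrand is well defined on the relevant domain, which is why the support assumption stated just before Theorem~\ref{thm:variance_cv_is} is invoked: it guarantees $g_{\boldsymbol{\alpha}}(\mathbf{x}) > 0$ wherever $\psi(\mathbf{x})\neq 0$, so no $0/0$ issues arise and the convexity of $c/t$ can be applied uniformly. No differentiability or additional regularity on $\phi$, $f$ or $g_k$ is needed. Strict convexity could be argued under mild linear-independence assumptions on the $g_k$'s, but it is not required for the statement.
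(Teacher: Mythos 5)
Your proof is correct: the paper itself states this theorem without proof, quoting it from the cited references (Owen--Zhou 2000 and He--Owen 2014), and your argument --- drop the $\boldsymbol{\alpha}$-independent squared-mean term $(I-\beta)^2$, then observe that $\boldsymbol{\alpha}\mapsto\psi(\mathbf{x})^2/g_{\boldsymbol{\alpha}}(\mathbf{x})$ is the composition of the convex map $t\mapsto c/t$ with an affine map and that integration preserves convexity --- is essentially the argument given in those references. The only point to handle with care is the one you already flag: on the boundary of $S_K$ (some $\alpha_j=0$) the integrand may be $+\infty$ where $\psi\neq0$ and $g_{\boldsymbol{\alpha}}=0$, but convexity persists for the extended-real-valued objective, so the statement is unaffected.
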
 This theorem ensures then that simple optimisation algorithms can be performed in order to find a sequence of real coefficients $\boldsymbol{\alpha}\in S_K$ which gives a small variance for the IS-CV estimator.

\section{New adaptive algorithm for estimating multiple expectations with the same sample}
\label{sec:new_algorithm}

In this section, we first provide the theoretical motivations leading to a new procedure for estimating $J$ expectations with a unique $N$-sample which minimizes the criterion in Equation \eqref{eq:criterion}. Second, we describe more precisely the proposed ME-aISCV algorithm itself.

Recall that the optimal IS auxiliary distribution for estimating an expectation $I = \mathbb{E}_f\left(\phi\left(\mathbf{X}\right)\right)$ is given for all $\mathbf{x}\in\mathbb{X}$ by $g^*\left(\mathbf{x}\right) = I^{-1}\phi\left(\mathbf{x}\right)f\left(\mathbf{x}\right)$. For $j\in[\![1,J]\!]$, let us then denote $g_j^*$ the optimal IS auxiliary distribution for estimating $I_j = \mathbb{E}_{f_j}\left(\phi_j\left(\mathbf{X}\right)\right)$.

\subsection{Theoretical motivation}
\label{ssec:theoretical_motivations}

Let us begin with the following proposition. \begin{proposition}
For any IS auxiliary distribution $g$ and any i.i.d. sample $\left(\mathbf{X}^{(n)}\right)_{n\in[\![1,N]\!]}$ drawn according to $g$, and for any $j\in[\![1,J]\!]$, the estimator \begin{equation}\label{eq:optimal_cv_estimators}
    \widehat{I}_{g,g_j^*,I_j,N}^{\text{CV}} = \dfrac{1}{N}\sum_{n=1}^N\dfrac{\phi_j\left(\mathbf{X}^{(n)}\right)f_j\left(\mathbf{X}^{(n)}\right)-I_j g_j^*\left(\mathbf{X}^{(n)}\right)}{g\left(\mathbf{X}^{(n)}\right)} + I_j
\end{equation} is an unbiased zero-variance estimator of the expectation $I_j = \mathbb{E}_{f_j}\left(\phi_j\left(\mathbf{X}\right)\right)$.
\end{proposition}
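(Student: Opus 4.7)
The plan is to substitute the explicit form of the optimal IS density $g_j^*$ into the estimator and observe that the summand collapses to zero, so that the estimator is deterministically equal to $I_j$. This immediately delivers both unbiasedness and zero variance.

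More concretely, I would first recall from the preceding discussion that, for any $j\in[\![1,J]\!]$, the optimal IS auxiliary distribution is
\begin{equation*}
    g_j^*(\mathbf{x}) \;=\; I_j^{-1}\,\phi_j(\mathbf{x})\,f_j(\mathbf{x}) \qquad \text{for all } \mathbf{x}\in\mathbb{X}.
\end{equation*}
Multiplying through by $I_j$ yields the key identity $I_j\,g_j^*(\mathbf{x}) = \phi_j(\mathbf{x})\,f_j(\mathbf{x})$, which holds pointwise on the support of $\mathbf{x}\mapsto\phi_j(\mathbf{x})f_j(\mathbf{x})$ (and hence almost surely under $g$, provided the support condition required for IS-CV to be well-defined is satisfied, namely that the support of $g$ contains the support of $\mathbf{x}\mapsto \phi_j(\mathbf{x})f_j(\mathbf{x}) - I_j g_j^*(\mathbf{x})$, which is trivial here).

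Second, I would plug this identity into the numerator inside the sum defining $\widehat{I}_{g,g_j^*,I_j,N}^{\text{CV}}$. Each term becomes $\bigl(\phi_j(\mathbf{X}^{(n)})f_j(\mathbf{X}^{(n)})-\phi_j(\mathbf{X}^{(n)})f_j(\mathbf{X}^{(n)})\bigr)/g(\mathbf{X}^{(n)}) = 0$, so the entire empirical average vanishes $g$-almost surely, leaving $\widehat{I}_{g,g_j^*,I_j,N}^{\text{CV}} = I_j$ as a degenerate random variable. Unbiasedness follows from $\mathbb{E}_g\bigl[\widehat{I}_{g,g_j^*,I_j,N}^{\text{CV}}\bigr] = I_j$, and zero variance follows from $\mathbb{V}_g\bigl[\widehat{I}_{g,g_j^*,I_j,N}^{\text{CV}}\bigr] = 0$, since the estimator is constant.

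There is essentially no obstacle in this proof: the whole content is the observation that choosing the control variate to be $g_j^*$ and the control coefficient to be $I_j$ produces a CV estimator whose integrand cancels identically. The only mild subtlety worth a parenthetical comment is that this is, of course, a purely theoretical optimum, since both $I_j$ (which is the very quantity we want to estimate) and $g_j^*$ (which depends on $I_j$ through its normalising constant) are unknown in practice; this is exactly the motivation for the adaptive algorithm developed in the remainder of the section.
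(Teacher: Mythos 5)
Your proposal is correct and follows exactly the paper's own argument: substituting $g_j^*(\mathbf{x}) = I_j^{-1}\phi_j(\mathbf{x})f_j(\mathbf{x})$ makes every summand vanish, so the estimator is deterministically equal to $I_j$, giving both unbiasedness and zero variance. You simply spell out the cancellation and the support condition more explicitly than the paper does.
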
 
\begin{proof}
By plugging the expressions of $g_j^*$ in the estimator $\widehat{I}_{g,g_j^*,I_j,N}^{\text{CV}}$, a simple computation leads to $\widehat{I}_{g,g_j^*,I_jN}^{\text{CV}} = I_j$. Equivalently, $\mathbb{E}_g\left(\widehat{I}_{g,g_j^*,I_j,N}^{\text{CV}}\right) = I_j$ and $\mathbb{V}_g\left(\widehat{I}_{g,g_j^*,I_j,N}^{\text{CV}}\right) = 0$.
\end{proof}

Note that $I_j$ corresponds in this case to the optimal value of the control parameter $\beta \in \mathbb{R}$ given in Equation \eqref{eq:beta_star_cv}. This proposition implies that for any sequence $\left(w_j\right)_{j\in[\![1,J]\!]}\in \mathbb{R}_+$, we have: \begin{equation}
    \sum_{j=1}^J w_j\mathbb{V}_g\left(\widehat{I}_{g,g_j^*,I_j,N}^{\text{CV}}\right) = 0.
\end{equation} This result is very interesting because it shows that the use of CV allows to make the criterion to minimise in Equation \eqref{eq:criterion} equal to 0 with any auxiliary sampling distribution. Nevertheless, the estimators in Equation \eqref{eq:optimal_cv_estimators} cannot be used in practice because they require the knowledge of the values of $\left(I_j\right)_{j\in[\![1,J]\!]}$, which are the quantities to estimate.

To overcome this problem, in the same way as in the classical IS framework presented in Section \ref{ssec:importance_sampling}, it is possible to approach these optimal IS distributions $\left(g_j^*\right)_{j\in[\![1,J]\!]}$ by auxiliary distributions $\left(g_{\boldsymbol{\lambda}_j}\right)_{j\in[\![1,J]\!]}$ lying in a parametric family of distributions $\mathcal{D}_{\Lambda}$. We can then plug them in the expression of the estimators in Equation \eqref{eq:optimal_cv_estimators}. The modification of the control functions from $\left(g_j^*\right)_{j\in[\![1,J]\!]}$ to $\left(g_{\boldsymbol{\lambda}_j}\right)_{j\in[\![1,J]\!]}$ implies that the optimal values of the control parameters $\left(\beta_j\right)_{j\in[\![1,j]\!]}$ are no longer equal to the expectations $\left(I_j\right)_{j\in[\![1,J]\!]}$. It is then necessary to estimate these new optimal parameters with some estimators $\left(\widehat{\beta}_j\right)_{j\in[\![1,J]\!]}$ of the expression in Equation \eqref{eq:beta_star_cv}. 

Moreover, the distribution $g_{\boldsymbol{\lambda}_j}$ is usually well-suited to estimate the expectation $I_j$ by IS. Then, Theorem \ref{thm:variance_cv_is} motivates us to consider a mixture $g_{\boldsymbol{\alpha}} = \sum_{j=1}^J \alpha_j g_{\boldsymbol{\lambda}_j}$ as the IS auxiliary sampling distribution. Indeed, since this distribution is a mixture of the $\left(g_{\boldsymbol{\lambda}_j}\right)_{j\in[\![1,J]\!]}$, it is possible to apply Theorem \ref{thm:variance_cv_is} to each estimator $\widehat{I}_{g_{\boldsymbol{\alpha}},g_{\boldsymbol{\lambda}_j},\beta_j^*,N}^{\text{CV}}$, with $\beta_j^*$ the optimal control parameter associated to this problem: \begin{equation}\label{eq:upper_bound_variance_cv}
    N\mathbb{V}_{g_{\boldsymbol{\alpha}}}\left(\widehat{I}_{g_{\boldsymbol{\alpha}},g_{\boldsymbol{\lambda}_j},\beta_j^*,N}^{\text{CV}}\right) \leq \alpha_j^{-1}\mathbb{V}_{g_{\boldsymbol{\lambda}_j}}\left(\dfrac{\phi_j\left(\mathbf{X}\right)f_j\left(\mathbf{X}\right)}{g_{\boldsymbol{\lambda}_j}\left(\mathbf{X}\right)}\right).
\end{equation} This result gives thus an interesting upper bound for the variance of each estimator $\widehat{I}_{g_{\boldsymbol{\alpha}},g_{\boldsymbol{\lambda}_j},\beta_j^*,N}^{\text{CV}}$ for $j\in[\![1,J]\!]$, and thus an upper bound of the criterion to minimize in \eqref{eq:criterion} by summing these upper bounds.

Equation \eqref{eq:upper_bound_variance_cv} highlights as well the importance of the choice of the weights $\boldsymbol{\alpha} = \left(\alpha_j\right)_{j\in[\![1,J]\!]}\in S_J$ of the mixture. Indeed, for $j\in [\![1,J]\!]$, if $\alpha_j\ll 1$ and $\mathbb{V}_{g_{\boldsymbol{\lambda}_j}}\left(\phi_j\left(\mathbf{X}\right)f_j\left(\mathbf{X}\right)\left/g_{\boldsymbol{\lambda}_j}\left(\mathbf{X}\right)\right.\right)$ is large, then the upper bound of the variance of $\widehat{I}_{g_{\boldsymbol{\alpha}},g_{\boldsymbol{\lambda}_j},\widehat{\beta}_j,N}^{\text{CV}}$ will be bad. The intuition given by Equation \eqref{eq:upper_bound_variance_cv} is that high values of $w_j\mathbb{V}_{g_{\boldsymbol{\lambda}_j}}\left(\phi_j\left(\mathbf{X}\right)f_j\left(\mathbf{X}\right)\left/g_{\boldsymbol{\lambda}_j}\left(\mathbf{X}\right)\right.\right)$ must be associated to high values of $\alpha_j$, and the other way around. It is thus beneficial to optimize the choice of $\boldsymbol{\alpha}$, which is facilitated  by the following extension of Theorem \ref{thm:convex_optim_single_expectation} to the case of multiple expectations.
\begin{theorem}
For any $\left(\beta_j\right)_{j\in[\![1,J]\!]}\in \mathbb{R}^J$ and any family of positive weights $\left(w_j\right)_{j\in[\![1,J]\!]}\in \mathbb{R}_+^J$, the optimisation problem
\begin{equation}\label{eq:optimiation_alpha_k}
    \boldsymbol{\alpha}^* = \underset{\boldsymbol{\alpha}\in S_J}{\mathrm{argmin}} \ \sum_{j=1}^J w_j \mathbb{V}_{g_{\boldsymbol{\alpha}}}\left(\dfrac{\phi_j\left(\mathbf{X}\right)f_j\left(\mathbf{X}\right)-\beta_j g_{\boldsymbol{\lambda}_j}\left(\mathbf{X}\right)}{g_{\boldsymbol{\alpha}}\left(\mathbf{X}\right)}\right)
\end{equation} is convex on $S_J$.
\end{theorem}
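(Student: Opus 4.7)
The plan is to reduce this to Theorem \ref{thm:convex_optim_single_expectation} essentially term by term. The objective is a positive weighted sum, so it suffices to show that each individual mapping
\[
V_j(\boldsymbol{\alpha}) = \mathbb{V}_{g_{\boldsymbol{\alpha}}}\!\left(\dfrac{\phi_j(\mathbf{X})f_j(\mathbf{X})-\beta_j g_{\boldsymbol{\lambda}_j}(\mathbf{X})}{g_{\boldsymbol{\alpha}}(\mathbf{X})}\right)
\]
is convex on $S_J$; then a nonnegative combination of convex functions is convex.

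To prove convexity of $V_j$, I would first split variance into second moment minus squared mean. The mean under $g_{\boldsymbol{\alpha}}$ collapses by an unweighting computation to $\int (\phi_j f_j - \beta_j g_{\boldsymbol{\lambda}_j})\,d\mathbf{x} = I_j - \beta_j$, which does not depend on $\boldsymbol{\alpha}$. Hence, up to an additive constant,
\[
V_j(\boldsymbol{\alpha}) = \int_{\mathbb{X}} \dfrac{\bigl(\phi_j(\mathbf{x})f_j(\mathbf{x})-\beta_j g_{\boldsymbol{\lambda}_j}(\mathbf{x})\bigr)^2}{g_{\boldsymbol{\alpha}}(\mathbf{x})}\, d\mathbf{x} + \text{const}.
\]

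Now the integrand is, pointwise in $\mathbf{x}$, of the form $c(\mathbf{x})/g_{\boldsymbol{\alpha}}(\mathbf{x})$ with $c(\mathbf{x})\geq 0$, and $\boldsymbol{\alpha}\mapsto g_{\boldsymbol{\alpha}}(\mathbf{x}) = \sum_{k=1}^J \alpha_k g_{\boldsymbol{\lambda}_k}(\mathbf{x})$ is an affine, strictly positive function of $\boldsymbol{\alpha}$ on the relative interior of $S_J$ (under the standing support assumption that each $g_{\boldsymbol{\lambda}_k}$ dominates the integrands of interest). Since the scalar map $a\mapsto c/a$ is convex on $(0,\infty)$ for $c\geq 0$, and convexity is preserved by precomposition with an affine map, each pointwise integrand is convex in $\boldsymbol{\alpha}$. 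Convexity is preserved by integration against a nonnegative measure (Fubini/Tonelli applied to the convexity inequality), so $V_j$ is convex on $S_J$. On the boundary of $S_J$, where some $\alpha_k$ vanish, the integrals may blow up to $+\infty$, but this is consistent with and does not break the convexity inequality.

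There is no substantive obstacle; the only point needing care is the pointwise convexity of $a\mapsto c/a$ together with the fact that $g_{\boldsymbol{\alpha}}$ is affine in $\boldsymbol{\alpha}$, which is exactly the ingredient used in Theorem \ref{thm:convex_optim_single_expectation}. The present statement is simply the observation that this convexity argument is stable under nonnegative linear combinations indexed by $j\in[\![1,J]\!]$, so the overall criterion inherits convexity on $S_J$.
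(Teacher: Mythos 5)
Your proposal is correct and follows essentially the same route as the paper: the paper likewise observes that each term of the sum is convex on $S_J$ by the single-expectation result (Theorem \ref{thm:convex_optim_single_expectation}) and concludes because a linear combination with positive weights of convex functions is convex. The only difference is that you additionally re-derive the per-term convexity (constant mean, second moment as an integral of $c(\mathbf{x})/g_{\boldsymbol{\alpha}}(\mathbf{x})$ with $g_{\boldsymbol{\alpha}}$ affine in $\boldsymbol{\alpha}$), which the paper simply cites.
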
 
\begin{proof}
Theorem \ref{thm:convex_optim_single_expectation} ensures that each individual term of the sum in Equation \eqref{eq:optimiation_alpha_k} is convex on $S_J$ w.r.t. $\boldsymbol{\alpha}$. Therefore, since it is a linear combination with positive weights of convex functions, this optimisation problem is also convex on $S_J$ w.r.t. $\boldsymbol{\alpha}$.
\end{proof} In the same way as in Section \ref{ssec:control_variates}, this theorem ensures then that simple optimisation algorithms can be performed in order to find a sequence of coefficients $\boldsymbol{\alpha}\in S_J$ which reduces the criterion to minimize.

\subsection{Presentation of the algorithm}

\subsubsection{Summary and input parameters}
We propose here a new adaptive algorithm called ME-aISCV to estimate $J$ expectations with the same $N$-sample. In the same way as other adaptive IS algorithms \cite{cornuet2012adaptive,marin2012consistency,de2005tutorial,rubinstein2013cross}, the general idea is to adaptively update the IS auxiliary sampling distributions $\left(g_{\boldsymbol{\lambda}_j}\right)_{j\in[\![1,J]\!]}$, the sampling distribution $g_{\boldsymbol{\alpha}}$ as well as the control parameters $\left(\beta_j\right)_{j\in[\![1,J]\!]}$ until a stopping criterion is reached. Then, a new independent sample drawn according to the final sampling distribution allows to get unbiased estimators by IS and CV of the $J$ expectations.

Let us describe more precisely the ME-aISCV algorithm. As input parameters, it requires the family of functions $\left(\phi_j\right)_{j\in[\![1,J]\!]}$ as well as the corresponding family of input distributions $\left(f_j\right)_{j\in[\![1,J]\!]}$. It requires also the weights $\left(w_j\right)_{j\in[\![1,J]\!]}$, a maximal number of calls allowed to the functions $N_{max}\in\mathbb{N}^*$ and a sequence $\left(N_k\right)_{k\in\mathbb{N}}\in\left(\mathbb{N}^*\right)^{\mathbb{N}}$ corresponding to the number of points to draw at each iteration of the algorithm.

\subsubsection{Initialization}
First, during the initialisation step ($k=0$), an initial $N_0$-sample $\left(\mathbf{X}^{(0,n)}\right)_{n\in[\![1,N_0]\!]}$ is drawn according to an initial sampling distribution $h_0$. This initial sample allows to compute first estimations $\widehat{I}_j^{(0)}$ of the expectations as well as to estimate the new parameters at each iteration of the algorithm. Natural choices for $h_0$ can be either the unweighted mixture $J^{-1}\sum_{j=1}^Jf_j$ or the weighted mixture $\left(\sum_{j=1}^Jw_j\right)^{-1}\sum_{j=1}^Jw_jf_j$. Note that if we are in Case 1 (in Section \ref{ssec:presentation_problem}), i.e. for all $i \in [\![1,J]\!]$ we have $f_i = f$, then $h_0$ is equal to $f$. Then, for $j\in[\![1,J]\!]$, we set $\alpha_j^{(0)} \propto \sqrt{w_j}\widehat{I}_j^{(0)}$ and $\beta_j^{(0)} = \widehat{I}_j^{(0)}$.

\subsubsection{The while loop and the stopping criterion}
Next, the while loop consists in adaptively updating the parameters $\left(\boldsymbol{\lambda}_j\right)_{j\in[\![1,J]\!]}$, $\boldsymbol{\alpha}$ and $\left(\beta_j\right)_{j\in[\![1,J]\!]}$. To do so, in the same way as in the adaptive multiple IS algorithm presented in \cite{cornuet2012adaptive}, we use all the previous samples generated so far. Before the beginning of iteration $k\geq1$, we have already generated $k$ samples $\left(\mathbf{X}^{(0,n)}\right)_{n\in[\![1,N_0]\!]},\left(\mathbf{X}^{(1,n)}\right)_{n\in[\![1,N_1]\!]},\dots,\left(\mathbf{X}^{(k-1,n)}\right)_{n\in[\![1,N_{k-1}]\!]}$, respectively according to $h_0,g_{\boldsymbol{\alpha}^{(1)}},\dots,g_{\boldsymbol{\alpha}^{(k-1)}}$. We can then consider heuristically that the concatenated sample has been generated according to the mixture $h_{k-1} \propto N_0h_0 + \sum_{i=1}^{k-1}N_ig_{\boldsymbol{\alpha}^{(i)}}$, which will be useful for the following estimations.

We first compute the new parameters $\left(\boldsymbol{\lambda}_j^{(k)}\right)_{j\in[\![1,J]\!]}$ of the IS auxiliary distribution approaching the optimal distributions $\left(g_j^*\right)_{j\in[\![1,J]\!]}$. We do so by solving the cross-entropy problem in Equation \eqref{eq:cross-entropy_max}. As explained in Section \ref{ssec:importance_sampling}, we will solve it using the stochastic counterpart with the available sample distributed according to $h_{k-1}$. Thus, in order to estimate the expectation in Equation \eqref{eq:cross-entropy_max}, it is necessary to rewrite it as an expectation over $h_{k-1}$: \begin{equation}
    \forall j \in [\![1,J]\!], \ \boldsymbol{\lambda}_j^{(k)} = \underset{\boldsymbol{\lambda}\in\Lambda}{\mathrm{argmax}} \ \mathbb{E}_{h_{k-1}}\left[\phi_j\left(\mathbf{X}\right)\log\left(g_{\boldsymbol{\lambda}}\left(\mathbf{X}\right)\right)\dfrac{f_j\left(\mathbf{X}\right)}{h_{k-1}\left(\mathbf{X}\right)}\right].
\end{equation} The corresponding stochastic counterpart problem to solve is then given by: \begin{equation}
\label{eq:update_lambda_k}
    \boldsymbol{\lambda}_j^{(k)} = \underset{\boldsymbol{\lambda}\in\Lambda}{\mathrm{argmax}} \ \sum_{i=0}^{k-1} \sum_{n=1}^{N_i} \phi_j\left(\mathbf{X}^{(i,n)}\right)\log\left(g_{\boldsymbol{\lambda}}\left(\mathbf{X}^{(i,n)}\right)\right)\dfrac{f_j\left(\mathbf{X}^{(i,n)}\right)}{h_{k-1}\left(\mathbf{X}^{(i,n)}\right)}.
\end{equation} 

We second compute the new vector $\boldsymbol{\alpha}^{(k)}\in S_J$. As explained in Section \ref{ssec:theoretical_motivations}, we will do so by solving the convex optimisation problem in Equation \eqref{eq:optimiation_alpha_k}, with the current values of the control parameters $\left(\widehat{\beta}_j^{(k-1)}\right)_{j\in[\![1,J]\!]}$. Practically, we have to estimate each variance in the sum, again with the available sample distributed according to $h_{k-1}$. The computation developed in Appendix \ref{app:optimisation_convex} shows that solving the problem in Equation \eqref{eq:optimiation_alpha_k} is equivalent to solve the following convex optimisation problem: \begin{equation}\label{eq:optimisation_alpha_k_expectation}
    \boldsymbol{\alpha}^{(k)} = \underset{\boldsymbol{\alpha}\in S_J}{\mathrm{argmin}} \ \mathbb{E}_{h_{k-1}}\left[\dfrac{\sum_{j=1}^Jw_j\left(\phi_j\left(\mathbf{X}\right)f_j\left(\mathbf{X}\right) - \widehat{\beta}_j^{(k-1)}g_{\boldsymbol{\lambda}_j^{(k)}}\left(\mathbf{X}\right)\right)^2}{g_{\boldsymbol{\alpha}}\left(\mathbf{X}\right)h_{k-1}\left(\mathbf{X}\right)}\right].
\end{equation} The corresponding stochastic counterpart problem to solve is then given by: \begin{equation}
\label{eq:update_alpha_k}
\boldsymbol{\alpha}^{(k)} = \underset{\boldsymbol{\alpha}\in S_J}{\mathrm{argmin}} \ \sum_{i=0}^{k-1} \sum_{n=1}^{N_i} \dfrac{\sum_{j=1}^J w_j\left(\phi_j\left(\mathbf{X}^{(i,n)}\right)f_j\left(\mathbf{X}^{(i,n)}\right) - \widehat{\beta}_j^{(k-1)}g_{\boldsymbol{\lambda}_j^{(k)}}\left(\mathbf{X}^{(i,n)}\right)\right)^2}{g_{\boldsymbol{\alpha}}\left(\mathbf{X}^{(i,n)}\right)h_{k-1}\left(\mathbf{X}^{(i,n)}\right)}.
\end{equation} Independently of the optimisation algorithm chosen to solve this problem, we propose to use as starting point at iteration $k$ the optimum found at iteration $k-1$, which is $\boldsymbol{\alpha}^{(k-1)}$. We compute then the new mixture $g_{\boldsymbol{\alpha}^{(k)}} = \sum_{j=1}^J\alpha_j^{(k)}g_{\boldsymbol{\lambda}_j^{(k)}}$, we draw a new sample $\left(\mathbf{X}^{(k,n)}\right)_{n\in[\![1,N_k]\!]}$ according to $g_{\boldsymbol{\alpha}^{(k)}}$ and we compute the new simulated sampling mixture $h_k \propto  N_0h_0 + \sum_{i=1}^{k}N_ig_{\boldsymbol{\alpha}^{(i)}}$.

Third, we compute the new values of the control parameters $\left(\widehat{\beta}_j^{(k)}\right)_{j\in[\![1,J]\!]}\in \mathbb{R}^J$. We estimate each of them for $j\in [\![1,J]\!]$ with the following estimator of the optimal value of the control parameter in Equation \eqref{eq:beta_star_cv}: \begin{multline}
\label{eq:update_beta_k}
\widehat{\beta}_j^{(k)} = \left(\dfrac{1}{N_k-1}\sum_{n=1}^{N_k}\left(\dfrac{g_{\boldsymbol{\lambda}_j^{(k)}}\left(\mathbf{X}^{(k,n)}\right)}{g_{\boldsymbol{\alpha}^{(k)}}\left(\mathbf{X}^{(k,n)}\right)} - m_j^{(1,k)} \right)^2\right)^{-1} \\ \left(\dfrac{1}{N_k-1}\sum_{n=1}^{N_k-1}\left(\dfrac{g_{\boldsymbol{\lambda}_j^{(k)}}\left(\mathbf{X}^{(k,n)}\right)}{g_{\boldsymbol{\alpha}^{(k)}}\left(\mathbf{X}^{(k,n)}\right)} - m_j^{(1,k)}\right)\left(\dfrac{\phi_j\left(\mathbf{X}^{(k,n)}\right)f_j\left(\mathbf{X}^{(k,n)}\right)}{g_{\boldsymbol{\alpha}^{(k)}}\left(\mathbf{X}^{(k,n)}\right)} - m_j^{(2,k)} \right)\right),
\end{multline} where \begin{equation}
    m_j^{(1,k)} = \dfrac{1}{N_k}\sum_{n=1}^{N_k}\dfrac{g_{\boldsymbol{\lambda}_j^{(k)}}\left(\mathbf{X}^{(k,n)}\right)}{g_{\boldsymbol{\alpha}^{(k)}}\left(\mathbf{X}^{(k,n)}\right)} \mbox{ and } m_j^{(2,k)} = \dfrac{1}{N_k}\sum_{n=1}^{N_k}\dfrac{\phi_j\left(\mathbf{X}^{(k,n)}\right)f_j\left(\mathbf{X}^{(k,n)}\right)}{g_{\boldsymbol{\alpha}^{(k)}}\left(\mathbf{X}^{(k,n)}\right)}.
\end{equation} Note that we choose here to use only the last sample drawn according to $g_{\boldsymbol{\alpha}^{(k)}}$ in order to make the estimation process easier, because the covariance and the variance operators in Equation \eqref{eq:beta_star_cv} are computed according to $g_{\boldsymbol{\alpha}^{(k)}}$.

Finally, we decide to stop the while loop when the final value of the criterion to minimise in Equation \eqref{eq:criterion} does not decrease anymore between two successive iterations, and more precisely when the following inequality is satisfied: \begin{multline}
\label{eq:stopping_criterion}
\dfrac{1}{N_{max}-N_0-\dots- N_{k-1}}\sum_{j=1}^Jw_j\mathbb{V}_{g_{\boldsymbol{\alpha}^{(k-1)}}}\left[\dfrac{\phi_j\left(\mathbf{X}\right)f_j\left(\mathbf{X}\right) - \widehat{\beta}_j^{(k-1)}g_{\boldsymbol{\lambda}_j^{(k-1)}}\left(\mathbf{X}\right)}{g_{\boldsymbol{\alpha}^{(k-1)}}\left(\mathbf{X}\right)}\right] \leq \\ \dfrac{1}{N_{max}-N_0-\dots- N_{k}}\sum_{j=1}^Jw_j\mathbb{V}_{g_{\boldsymbol{\alpha}^{(k)}}}\left[\dfrac{\phi_j\left(\mathbf{X}\right)f_j\left(\mathbf{X}\right) - \widehat{\beta}_j^{(k)}g_{\boldsymbol{\lambda}_j^{(k)}}\left(\mathbf{X}\right)}{g_{\boldsymbol{\alpha}^{(k)}}\left(\mathbf{X}\right)}\right].
\end{multline} This inequality compares at the end of iteration $k$ the final value of the criterion in \eqref{eq:criterion} that we would get if we had stopped the while loop after iteration $k-1$ with its value after iteration $k$. In the inequality, $N_{max}-N_0-\dots- N_{k-1}$ is the size of the independent sample used to estimate the integrals if the while loop is stopped at step $k-1$, and $N_{max}-N_0-\dots- N_{k}$ is similar for a stop at step $k$. If the inequality in Equation \eqref{eq:stopping_criterion} is satisfied, we consider that having paid a budget $N_k$ to refine the parameters from step $k-1$ to $k$ was not worth it: it would have been better to allocate this budget $N_k$ to the final estimates of the integrals, using the parameters of step $k-1$. In practice, the empirical counterpart of Equation \eqref{eq:stopping_criterion} is evaluated with the samples $\left(\mathbf{X}^{(k-1,n)}\right)_{n\in[\![1,N_{k-1}]\!]}$ and $\left(\mathbf{X}^{(k,n)}\right)_{n\in[\![1,N_{k}]\!]}$ for the left and right-hand side respectively.



\subsubsection{Final estimate with a new independent sample}

At last, at the end of the while loop after $k$ iterations, there are $N_f = N_{max} - N_0 - \dots - N_k$ calls to the functions remaining. We draw then a final i.i.d sample $\left(\mathbf{X}^{(n)}\right)_{n\in[\![1,N_f]\!]}$ according to the final sampling distribution $g_{\boldsymbol{\alpha}^{(k)}}$ which is independent, conditionally to $\boldsymbol{\alpha}^{(k)}$, $\left(\boldsymbol{\lambda}_j^{(k)}\right)_{j\in[\![1,J]\!]}$ and $\left(\widehat{\beta}_j^{(k)}\right)_{j\in[\![1,J]\!]}$, from all the previous ones drawn so far in order to get unbiased estimates $\left(\widehat{I}_{g_{\boldsymbol{\alpha}^{(k)}},g_{\boldsymbol{\lambda}_j^{(k)}},\widehat{\beta}_j^{(k)},N_f}^{\text{CV}}\right)_{j\in[\![1,J]\!]}$ of the expectations $\left(I_j\right)_{j\in[\![1,J]\!]}$, as remarked in Section \ref{ssec:control_variates}. Algorithm \ref{algo:algorithm} illustrates how to implement the described ME-aISCV algorithm in practice.

\begin{algorithm}
\caption{ME-aISCV algorithm for estimating $J$ expectations with the same $N$-sample}
\begin{algorithmic}[1]
\REQUIRE $\left(\phi_j\right)_{j\in[\![1,J]\!]}, \left(f_j\right)_{j\in[\![1,J]\!]}, \left(w_j\right)_{j\in[\![1,J]\!]},N_{max},\left(N_k\right)_{k\in\mathbb{N}}$
\STATE set $h_0 = J^{-1}\sum_{j=1}^Jf_j$ or $h_0 = \left(\sum_{j=1}^Jw_j\right)^{-1}\sum_{j=1}^Jw_jf_j$ and draw $\left(\mathbf{X}^{(0,n)}\right)_{n\in[\![1,N_0]\!]}$ according to $h_0$
\STATE for $j\in[\![1,J]\!]$, estimate $$\widehat{I}_j^{(0)} = \dfrac{1}{N_0} \sum_{n=1}^{N_0}\phi_j\left(\mathbf{X}^{(0,n)}\right)\dfrac{f_j\left(\mathbf{X}^{(0,n)}\right)}{h_0\left(\mathbf{X}^{(0,n)}\right)}$$
\STATE set $\alpha_j^{(0)} \propto \sqrt{w_j}\widehat{I}_j^{(0)}$ and $\widehat{\beta}_j^{(0)} = \widehat{I}_j^{(0)}$
\STATE set $N_{eval}=N_0$ and $k=0$
\WHILE{$N_{eval}<N_{max}/2$}
\STATE update $k = k+1$
\STATE for $j\in[\![1,J]\!]$, estimate the new distribution parameters $\boldsymbol{\lambda}_j^{(k)}$ by solving the cross-entropy problem in Equation \eqref{eq:update_lambda_k}
\STATE estimate $\boldsymbol{\alpha}^{(k)}$ by solving the optimisation problem in Equation \eqref{eq:update_alpha_k} using as starting point $\boldsymbol{\alpha}^{(k-1)}$
\STATE set $g_{\boldsymbol{\alpha}^{(k)}} = \sum_{j=1}^J\alpha_j^{(k)}g_{\boldsymbol{\lambda}_j^{(k)}}$
\STATE draw $\left(\mathbf{X}^{(k,n)}\right)_{n\in[\![1,N_k]\!]}$ according to $g_{\boldsymbol{\alpha}^{(k)}}$ and update $N_{eval} = N_{eval}+N_k$
\STATE update $h_k = \frac{N_{eval} - N_k}{N_{eval}}h_{k-1} + \frac{N_k}{N_{eval}}g_{\boldsymbol{\alpha}^{(k)}}$
\STATE for $j\in[\![1,J]\!]$, estimate $\widehat{\beta}_j^{(k)}$ with Equation \eqref{eq:update_beta_k}
\IF{the stopping criterion in Equation \eqref{eq:stopping_criterion} is satisfied}
\STATE exit the while loop 
\ENDIF

\ENDWHILE

\STATE set $N_f = N_{max} - N_{eval}$
\STATE draw $\left(\mathbf{X}^{(n)}\right)_{n\in[\![1,N_f]\!]}$ according to $g_{\boldsymbol{\alpha}^{(k)}}$

\RETURN $$\widehat{I}_{g_{\boldsymbol{\alpha}^{(k)}},g_{\boldsymbol{\lambda}_j^{(k)}},\widehat{\beta}_j^{(k)},N_f}^{\text{CV}} = \dfrac{1}{N_f}\sum_{n=1}^{N_f}\dfrac{\phi_j\left(\mathbf{X}^{(n)}\right)f_j\left(\mathbf{X}^{(n)}\right) - \widehat{\beta}_j^{(k)}g_{\boldsymbol{\lambda}_j^{(k)}}\left(\mathbf{X}^{(n)}\right)}{g_{\boldsymbol{\alpha}^{(k)}}\left(\mathbf{X}^{(n)}\right)} + \widehat{\beta}_j^{(k)}$$
\end{algorithmic}
\label{algo:algorithm}
\end{algorithm}

\FloatBarrier

\section{Applications to sensitivity analysis and numerical results}
\label{sec:numerical_results}

In order to illustrate the practical interest of the previous efforts, this section aims to evaluate numerically the performances of the suggested ME-aISCV algorithm to estimate $J$ expectations with the same sample, and to compare them to the performances of the existing methods. The code to reproduce the numerical experiments is publicly available at: \url{https://github.com/Julien6431/Multiple\_expectation\_estimation.git}.

Let us introduce the adopted numerical parameters that will be used: \begin{itemize}
    \item $N_{max} = 2\times 10^4$ which represents the total number of calls to the functions,
    \item for all $k\in\mathbb{N}$, we choose $N_k = N_{max}/10 = 2\times 10^3$,
    \item each of the IS auxiliary distribution $g_{\boldsymbol{\lambda}}$ will be picked in the Gaussian family,
    \item we use the Sequential Least SQuares Programming (SLSQP) algorithm \cite{kraft1988software} to solve the convex problem in Equation \eqref{eq:optimiation_alpha_k}, because it is well-suited for bounded and constrained problems,
    \item $n_{rep} = 200$ realisations of each estimator to represent the results as boxplots.
\end{itemize}

For adaptive algorithms, a discussion about the choice of the sequence $\left(N_k\right)_{k\in\mathbb{N}}$ is made in \cite{cornuet2012adaptive}. At first, it can be more intuitive to consider a sequence that increases with the accuracy of the IS auxiliary distributions. However, it is difficult to recover from poor early samples because of the "what-you-get-is-what-you-see" nature of these kind of algorithms. Therefore, as said in \cite{cornuet2012adaptive}, a good trade-off is then to consider a stationary sequence, as we do here.

\subsection{Estimation of the non-centered moments of the standard Gaussian distribution}
\label{ssec:moment_estimation}

First, for illustration purposes, let us consider the simple problem of the estimation of the non-centered even moments of the one-dimensional standard Gaussian distribution. More precisely, the expectations to estimate are defined by $\left(I_j^{\text{mom}} = \mathbb{E}_{f_1}\left(X^{2j}\right)\right)_{j\in[\![1,J]\!]}$, where $f_1$ is the PDF of the standard Gaussian distribution $\mathcal{N}_1\left(0,1\right)$. Note that we consider only the even moments between $2$ and $2J$ for two reasons: first, since the standard Gaussian distribution is symmetric around zero, its odd moments are equal to $0$, and second, the functions of interest must be non negative, as defined in Section \ref{ssec:general_presentation}.

We consider here $J=10$, and reference values are computed with their analytical expressions. We compare the performances of the proposed algorithm with the ones of the classical Monte Carlo estimations. For pedagogical purposes, as the theoretical values are known, we set $w_j = \left(I_j^{\text{mom}}\right)^{-2}$ for all $j\in[\![1,J]\!]$ in Equation \eqref{eq:criterion}. Numerical results are presented graphically in Figure \ref{fig:gaussian_moments}. The boxplots show that the quality of the estimations of the $J=10$ expectations is significantly better with the ME-aISCV algorithm than with the existing Monte Carlo method. These observations are confirmed by Table \ref{table:gaussian_moments}, because the criterion to minimize has been divided by about $10^4$. Note that for the moments of order $16$, $18$ and $20$, the Gaussian approximation of the standard Monte Carlo estimation does not kick-in at all. As a result, although the estimation is unbiased, its distribution is highly asymmetric and its median is far from its mean.

\begin{figure}
    \centering
    \includegraphics[width=\textwidth]{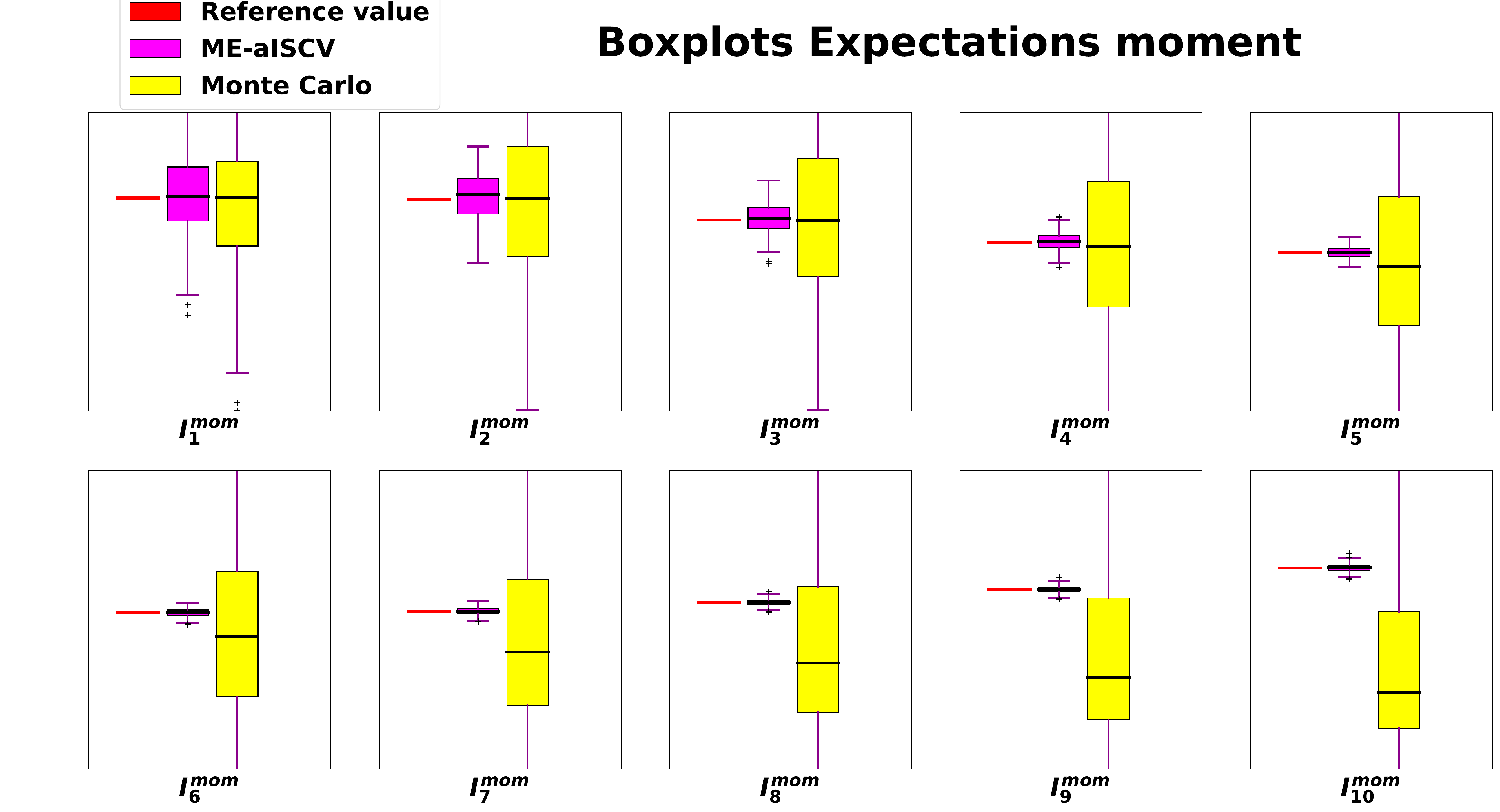}
    \caption{Estimation of the $J=10$ first even moments of the one-dimensional standard Gaussian distribution.}
    \label{fig:gaussian_moments}
\end{figure}

\begin{table}[h!]
\centering
\begin{tabular}{|c|c|c|} 
 \hline
& Monte-Carlo & ME-aISCV\\
 \hline
 $\sum_{j=1}^Jw_j\mathbb{V}\left(\widehat{I}_j^{\text{mom}}\right)$ & $12.782$ & $1.631\times10^{-3}$\\ 
\hline
\end{tabular}
\caption{Weighted sum of the variances of the estimators of the $J=10$ first even moments of the one-dimensional standard Gaussian distribution.}
\label{table:gaussian_moments}
\end{table}

Figure \ref{fig:gaussian_moments_pdf_evolution} represents the evolution of the distribution $g_{\boldsymbol{\alpha}_k}$ during the procedure for one execution of the ME-aISCV algorithm. The optimal IS distribution is a mixture of the distribution $g_{2j}^*\left(x\right) \propto x^{2j}f_1\left(x\right)$ for $j\in[\![1,J]\!]$. In particular, it is symmetric around zero and its standard deviation might be larger than $1$. First, the blue line represents the PDF of the initial distribution. Then, the orange line represents the PDF of the mixture $g_{\boldsymbol{\alpha}_1}$ obtained at the end of iteration $1$. We can see in particular that it is not symmetric around zero, and so it is not close to the target sampling distribution. Next, the green line represents the PDF of the mixture $g_{\boldsymbol{\alpha}_2}$ obtained at the end of iteration $2$. It is now symmetric around zero and is then a good candidate. However, another iteration is necessary because the stopping criterion in Equation \eqref{eq:stopping_criterion} is not reached yet. At last, the red line represents the PDF of the mixture $g_{\boldsymbol{\alpha}_3}$ obtained at the end of iteration $3$. It is very close to the green line, so the third iteration did not improve a lot the accuracy of the IS sampling distribution and the stopping criterion is thus reached. The distribution $g_{\boldsymbol{\alpha}_3}$ is then the final IS sampling distribution and the while loop is over in Algorithm \ref{algo:algorithm}.

\begin{figure}
    \centering
    \includegraphics[width=\textwidth]{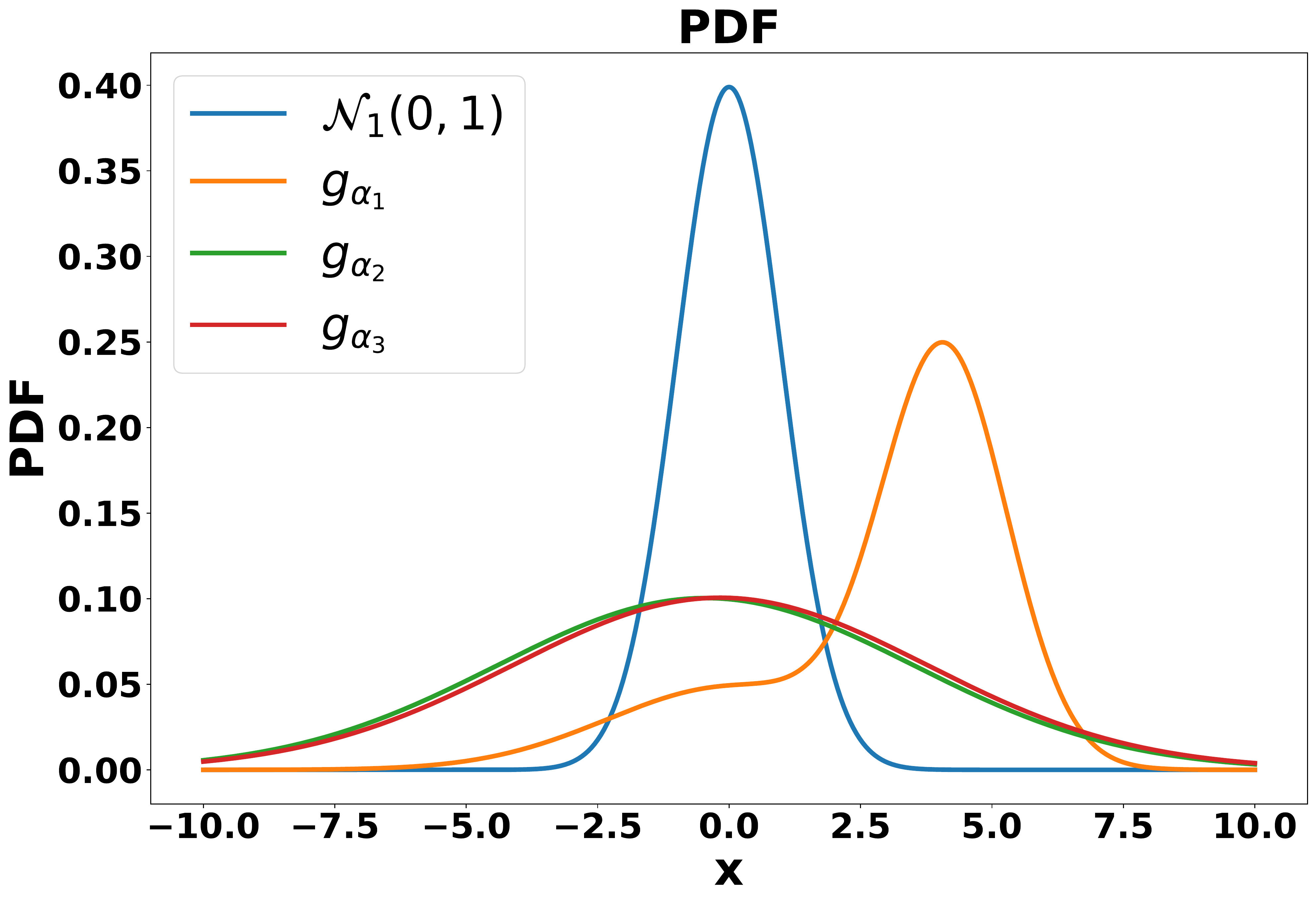}
    \caption{Evolution of the distribution $g_{\boldsymbol{\alpha}_k}$ during one execution of the algorithm.}
    \label{fig:gaussian_moments_pdf_evolution}
\end{figure}

\subsection{Estimation of Sobol' indices}
\label{ssec:sobol_estimation}

\subsubsection{Presentation of the problem}
\label{ssec:presentation_sobol}

The Sobol' indices \cite{sobol1993sensitivity} are quantitative tools which allow to quantify the influence of each input variable on the variability of the output, in the case where the input variables are mutually independent. For all $i\in[\![1,d]\!]$, the first order Sobol' indices are defined, for a function $\phi:\mathbb{X}\longrightarrow\mathbb{R}_+$, by: \begin{equation}\label{eq:sobol_indices} S_i = \dfrac{\mathbb{V}_{f}\left[\mathbb{E}_{f}\left(\phi\left(\mathbf{X}\right)|X_i\right)\right]}{\mathbb{V}_{f}\left(\phi\left(\mathbf{X}\right)\right)}.
\end{equation} We will estimate them with the well-known Pick-Freeze method introduced in \cite{sobol1993sensitivity,homma1996importance}. It consists in rewriting each Sobol' index in Equation \eqref{eq:sobol_indices} as a single expectation. The idea is to introduce a second random variable $\mathbf{X}^{i} = \left(X_i,\mathbf{X}_{-i}'\right)$, where $\mathbf{X}_{-i}' = \left(X_1',\dots,X_{i-1}',X_{i+1}',\dots,X_d'\right)$ satisfies $\mathbf{X}_{-i}' \overset{d}{=}\mathbf{X}_{-i}$ and $\mathbf{X}_{-i}'\perp\!\!\!\perp\mathbf{X}_{-i}$ and where $\perp\!\!\!\perp$ is the independence symbol. By decomposing the variance at the denominator as well, the Sobol' indices can be then rewritten for all $i\in[\![1,d]\!]$ as: \begin{equation}\label{eq:pick_freeze}S_i = \dfrac{\mathbb{E}_{f}\left(\phi(\mathbf{X})\phi(\mathbf{X}^i)\right) - \mathbb{E}_{f}\left(\phi(\mathbf{X})\right)^2}{\mathbb{E}_{f}\left(\phi(\mathbf{X})^2\right) - \mathbb{E}_{f}\left(\phi(\mathbf{X})\right)^2}.\end{equation} This procedure requires then $N_{\text{PF}} = N(d+1)$ calls to the function $\phi$ to compute the $d$ first order Sobol' indices.

\subsubsection{Formulation as a multiple estimation problem}

To estimate the $d$ first order Sobol' indices in Equation \eqref{eq:pick_freeze}, there are $J = d+2$ different expectations to estimate: the Pick-Freeze expectations $\mathbb{E}_{f}\left(\phi(\mathbf{X})\phi(\mathbf{X}^i)\right)$ for $i\in [\![1,d]\!]$, $\mathbb{E}_{f}\left(\phi(\mathbf{X})\right)$ and $\mathbb{E}_{f}\left(\phi(\mathbf{X})^2\right)$. The classical method to estimate them by Pick-Freeze consists in drawing two independent i.i.d $N$-samples according to $f$ and to mix both of them to build the random variables $\mathbf{X}$ and $\mathbf{X}^i$ for $i\in [\![1,d]\!]$. This process is equivalent to considering the augmented space $\mathbb{X}\times\mathbb{X}$ of dimension $2d$, to draw an i.i.d. $N$-sample according to the distribution of PDF $\widetilde{f} : \left(\mathbf{x},\mathbf{x}'\right) \in \mathbb{X}\times\mathbb{X} \mapsto f\left(\mathbf{x}\right)\times f\left(\mathbf{x}'\right)$ and to make the appropriate combinations to build the random variables $\mathbf{X}$ and $\mathbf{X}^i$ for $i\in [\![1,d]\!]$. The corresponding functions in the augmented space are then:\begin{equation}
    \begin{array}{l|rcl}
\phi_i : & \mathbb{X}\times\mathbb{X} & \longrightarrow & \mathbb{R} \\
    & \left(\mathbf{x},\mathbf{x}'\right) & \longmapsto & \phi\left(x_i,\mathbf{x}_{-i}\right)\phi\left(x_i,\mathbf{x}' _{-i}\right),
    \end{array} 
\end{equation} and \begin{equation}
    \begin{array}{l|rcl}
\phi_{d+1} : & \mathbb{X}\times\mathbb{X} & \longrightarrow & \mathbb{R} \\
    & \left(\mathbf{x},\mathbf{x}'\right) & \longmapsto & \phi\left(\mathbf{x}\right)
    \end{array} \mbox{ and } \begin{array}{l|rcl}
\phi_{d+2} : & \mathbb{X}\times\mathbb{X} & \longrightarrow & \mathbb{R} \\
    & \left(\mathbf{x},\mathbf{x}'\right) & \longmapsto & \phi\left(\mathbf{x}\right)^2.
    \end{array}
\end{equation}

Finally, we have here a family $\left(\mathbb{E}_{\widetilde{f}}\left(\phi_i\left(\mathbf{X},\mathbf{X}'\right)\right)\right)_{i\in[\![1,d+2]\!]}$ of $J=d+2$ different expectations to estimate under the same input distribution $\widetilde{f}$, which corresponds to the Case 1 presented in Section \ref{ssec:presentation_problem}. All the weights $\left(w_j\right)_{j\in[\![1,J]\!]}$ are set to $1$.

\subsubsection{Numerical results on the cantilever beam problem}
\label{sssec:numerical_resuls_sobol}

The cantilever beam problem is a real structure engineering problem which is presented in \cite{zhou2014moment,baoyu2017reliability}. Consider a rectangular cantilever beam structure. The dimensional parameters of the beam are denoted $l_X$, $l_Y$ and $L$. The elastic modulus of the structure is represented by $E$. Two random forces $F_X$ and $F_Y$ are exerted on the tip of the section. The goal function is the maximum vertical displacement of the tip section, which is given analytically according to the previous parameters by: \begin{equation}\label{eq:maximal_displacement}
    \phi\left(F_X,F_Y,E,l_X,l_Y,L\right) = \dfrac{4L^3}{10^9\times El_Xl_Y}\sqrt{\left(\dfrac{F_X}{l_X^2}\right)^2+\left(\dfrac{F_Y}{l_Y^2}\right)^2}.
\end{equation} The distributions of each input variable are listed in Table \ref{table:distr_cantiler_beam}. \begin{table}[h!]
\centering
\begin{tabular}{|c | l l l l|} 
 \hline
 & Symbol and Unit & Distribution & Mean & Coefficient of variation\\ [0.5ex] 
 \hline\hline
 1 & $F_X$ (N) & LogNormal & $m_1$ & $0.08$\\ 
 2 & $F_Y$ (N) & LogNormal & $m_2 $ & $0.08$\\
 3 & $E$ (Pa) & LogNormal & $m_3 $ & $0.06$\\
 4 & $l_X$ (m)& Normal & $m_4$ & $0.1$\\ 
 5 & $l_Y$ (m)& Normal & $m_5 $ & $0.1$\\ 
 6 & $L$ (m)& Normal & $m_6$ & $0.1$ \\[1ex] 
 \hline
\end{tabular}
\caption{Distributions of each input variable of the cantilever beam example}
\label{table:distr_cantiler_beam}
\end{table} Moreover, the dimensional variables $l_X$, $l_Y$ and $L$ are linearly dependent through the following Pearson correlation coefficients:
\begin{equation}
    \rho_{l_X,l_Y} = m_7 \mbox{ and } \rho_{L,l_X} = m_8  \mbox{ and } \rho_{L,l_Y} = m_9.
\end{equation} This input distribution is parameterized by the sequence of parameters $\boldsymbol{m} = \left(m_i\right)_{i\in[\![1,9]\!]}\in\mathbb{R}_+^3 \times \mathbb{R}^3 \times ]-1,1[^3$.

We want to estimate the first order Sobol' indices in Equation \eqref{eq:sobol_indices} for this system. Here, the input distribution is fully known and the parameter $\boldsymbol{m}$ is given by $\boldsymbol{m}_{sob} = \left(556.8,453.6,200,0.062,0.0987,4.29,0,0,0\right)$. In line with Section \ref{ssec:presentation_sobol}, all the input variables are independent because the three Pearson correlation coefficients $\rho_{l_X,l_Y}$, $\rho_{L,l_X}$ and $\rho_{L,l_Y}$ are assumed to be equal to 0 in this section, which is a necessary assumption for the Sobol' indices to have their full set of beneficial properties.

References values of the Sobol' indices are obtained by applying the existing Pick-Freeze estimation scheme with two $N$-samples of (very large) size $N=10^7$. Moreover, we compare the performances of the ME-aISCV algorithm with the ones of the existing standard Pick-Freeze estimation scheme using two $N_{max}$-samples such that both methods require exactly the same number $N_{\text{PF}}$ of calls to the function $\phi$. 

The results of the estimations of the first order Sobol' indices for the cantilever beam problem are given in Figure \ref{fig:cantilever_beam_sobol}. We can see that the ME-aISCV algorithm provides significantly better performances than the existing method for estimating the Sobol' indices. Indeed, the boxplots corresponding to the ME-aISCV algorithm are centered on the reference values and have a much smaller stretch. These observations are confirmed by the numerical values in Tables \ref{table:variance_individual_sobol} and \ref{table:cantilever_beam_criterion}. The individual variances of each estimator of the first order Sobol' indices are divided by $10$ and consequently the sum of the variances.

\begin{figure}
    \centering
    \includegraphics[width=\textwidth]{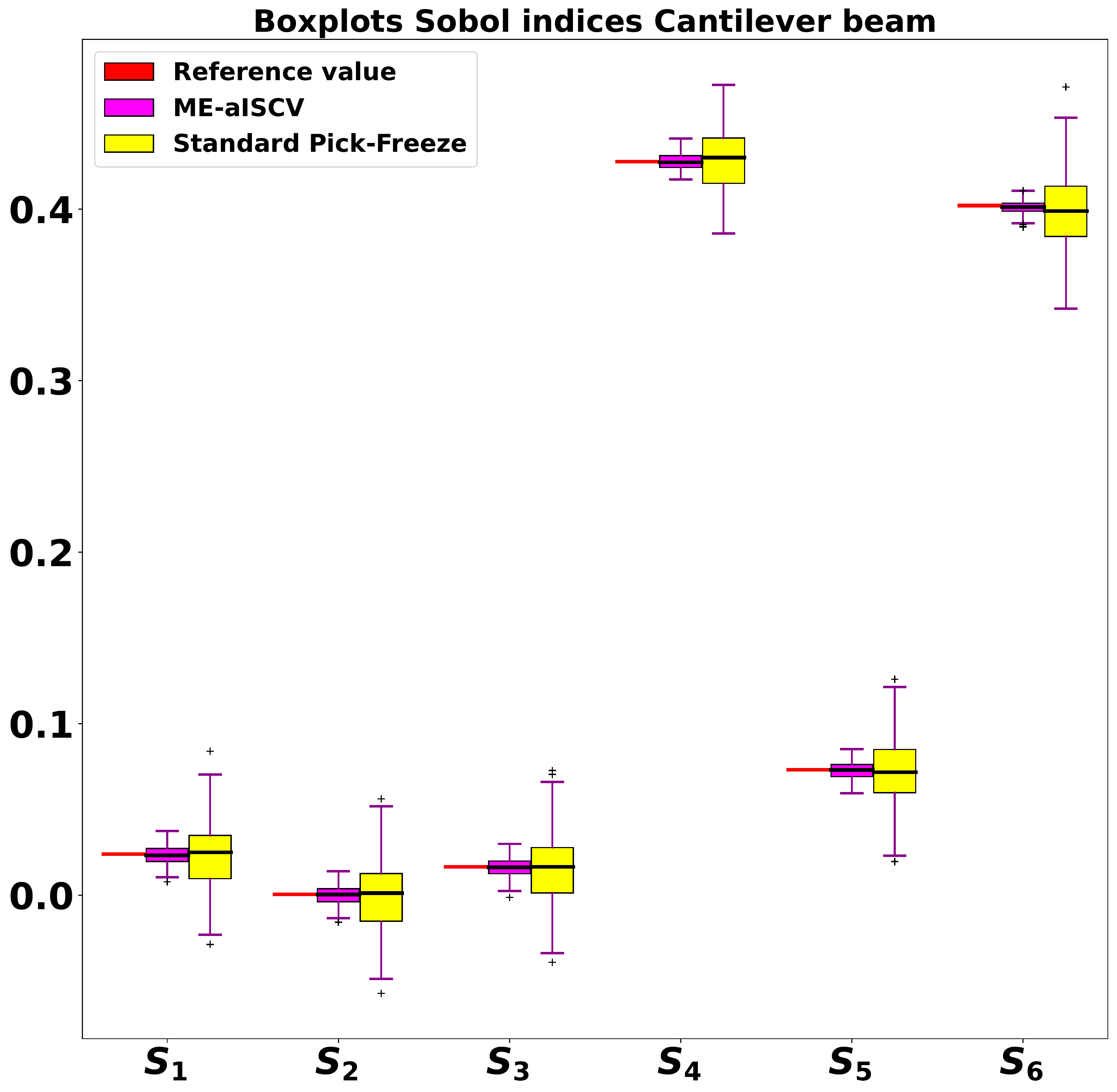}
    \caption{Estimation of the Sobol' indices for the cantilever beam problem with independent input variables.}
    \label{fig:cantilever_beam_sobol}
\end{figure}

\begin{table}[h!]
\centering
\begin{tabular}{|c|c|c|} 
 \hline
& standard Pick-Freeze & ME-aISCV\\
 \hline
 $\mathbb{V}\left(\widehat{S}_1\right)$ & $4.316\times10^{-4}$ & $3.372\times10^{-5}$\\ 
\hline
$\mathbb{V}\left(\widehat{S}_2\right)$ & $4.375\times10^{-4}$ & $3.364\times10^{-5}$\\ 
\hline
$\mathbb{V}\left(\widehat{S}_3\right)$ & $4.412\times10^{-4}$ & $3.308\times10^{-5}$\\ 
\hline
$\mathbb{V}\left(\widehat{S}_4\right)$ & $3.635\times10^{-4}$ & $2.377\times10^{-5}$\\ 
\hline
$\mathbb{V}\left(\widehat{S}_5\right)$ & $4.112\times10^{-4}$ & $3.053\times10^{-5}$\\ 
\hline
$\mathbb{V}\left(\widehat{S}_6\right)$ & $4.605\times10^{-4}$ & $1.943\times10^{-5}$\\ 
\hline
\end{tabular}
\caption{Individual variance of each of the $d=6$ estimators of the first order Sobol' indices for both methods.}
\label{table:variance_individual_sobol}
\end{table}

\begin{table}[h!]
\centering
\begin{tabular}{|c|c|c|} 
 \hline
& Monte-Carlo & ME-aISCV\\
 \hline
 $\sum_{i=1}^d\mathbb{V}\left(\widehat{S}_i\right)$ & $2.533\times10^{-3}$ & $1.733\times10^{-4}$\\ 
\hline
\end{tabular}
\caption{Sum of the variances of the estimators of the $d=6$ first order Sobol' indices for both methods.}
\label{table:cantilever_beam_criterion}
\end{table}

\FloatBarrier

\subsection{Sensitivity analysis w.r.t. parameters of the input distribution}
\label{ssec:sensitivity_wrt_input_params}

\subsubsection{Presentation of the problem}
Most of the time, the input distribution of a computer model $\phi$ is assumed to be fully known and determined. However, this assumption is not always true in practice. Indeed, because of lack of knowledge or data, the input distribution might depend on unknown or uncertain parameters $\boldsymbol{m}$, such as the mean vector or the standard deviations of the marginals for example. This epistemic uncertainty is then also propagated through the computer model $\phi$, and can thus have an impact on the output value of the system.

In order to quantify the individual influence of the parameters in $\boldsymbol{m}$ on a quantity of interest, such as the mean of the output, a solution is to compute some sensitivity indices of the uncertain parameters, such as the Sobol' indices defined in Section \ref{ssec:sobol_estimation}.

\subsubsection{Formulation as a multiple estimation problem}

The quantity of interest considered here is the mean output value of the function. To achieve the goal presented above and estimate the sensitivity indices, one need to get an input/output dataset $\left(\boldsymbol{m}^{(j)},\mathbb{E}_{f_{\boldsymbol{m}^{(j)}}}\left(\phi\left(\mathbf{X}\right)\right)\right)_{j\in[\![1,J]\!]}$, with $\left(\boldsymbol{m}^{(j)}\right)_{j\in[\![1,J]\!]}$ a sample of $J$ sets of parameters and $\left(f_{\boldsymbol{m}^{(j)}}\right)_{j\in[\![1,J]\!]}$ its corresponding PDF family. The challenge is then to efficiently estimate each expectation $\mathbb{E}_{f_{\boldsymbol{m}^{(j)}}}\left(\phi\left(\mathbf{X}\right)\right)$ for $j\in[\![1,J]\!]$. We have then to estimate a family of $J$ expectations of the same computer model $\phi$ under $J$ different input distributions $\left(f_{\boldsymbol{m}^{(j)}}\right)_{j\in[\![1,J]\!]}$, which corresponds to the Case 2 presented in Section \ref{ssec:presentation_problem}. All the weights $\left(w_j\right)_{j\in[\![1,J]\!]}$ are set to $1$.  

\subsubsection{Numerical results on the cantilever beam problem}

Let us consider again the cantilever beam problem presented in Section \ref{sssec:numerical_resuls_sobol}. The parameter $\boldsymbol{m} = \left(m_i\right)_{i\in[\![1,9]\!]}$ is here supposed uncertain, with independent components whose marginal distributions are given in Table \ref{table:distr_m}. The quantity of interest is the mean value of the maximal vertical displacement of the tip section given in Equation \eqref{eq:maximal_displacement}.

\begin{table}[h!]
\centering
\begin{tabular}{|c | c c|} 
 \hline
 & Parameter & Distribution \\ [0.5ex] 
 \hline\hline
 1 & $m_1$ & $\mathcal{U}(525,575)$ \\ 
 2 & $m_2$ & $\mathcal{U}(425,475)$ \\ 
 3 & $m_3$ & $\mathcal{U}(175,225)$ \\ 
 4 & $m_4$ & $\mathcal{U}(0.06,0.07)$ \\ 
 5 & $m_5$ & $\mathcal{U}(0.09,0.1)$ \\ 
 6 & $m_6$ & $\mathcal{U}(4,5)$ \\ 
 7 & $m_7$ & $\mathcal{U}(-0.6,0)$ \\
 8 & $m_8$ & $\mathcal{U}(0,0.5)$ \\
 9 & $m_9$ & $\mathcal{U}(0,0.5)$ \\[1ex] 
 \hline
\end{tabular}
\caption{Marginal distributions of the random parameter $\boldsymbol{m} = \left(m_i\right)_{i\in[\![1,9]\!]}$.}
\label{table:distr_m}
\end{table}

Here, we estimate $J=100$ expectations. A sample of parameters $\left(\boldsymbol{m}^{(j)}\right)_{j\in[\![1,J]\!]}$ is drawn according to the distribution in Table \ref{table:distr_m} with the Latin Hypercube Simulation (LHS) method \cite{helton2003latin}. References values for the $J=100$ expectations are computed with the crude Monte Carlo estimator of each expectation with samples of (very large) size $N=10^7$. 
To evaluate the performances of the ME-aISCV algorithm, we compare it to two existing estimators. The first one is the naive Monte Carlo method (nMC) which consists, for $j\in[\![1,J]\!]$, in drawing an i.i.d sample of size $N_{max}/J$ according to each distribution $f_{\boldsymbol{m}^{(j)}}$ and to compute the corresponding empirical mean of the output. The second one consists in considering a unique sampling distribution $h = J^{-1}\sum_{j=1}^Jf_{\boldsymbol{m}^{(j)}}$ which is the mixture of the $J$ different input distribution and to compute the following estimators: \begin{equation}
    \widehat{I}_j^{\text{MCmixt}} = \dfrac{1}{N_{max}} \sum_{n=1}^{N_{max}}\phi\left(\mathbf{X}^{(n)}\right)\dfrac{f_{\boldsymbol{m}^{(j)}}\left(\mathbf{X}^{(n)}\right)}{h\left(\mathbf{X}^{(n)}\right)},
\end{equation} where $\left(\mathbf{X}^{(n)}\right)_{n\in[\![1,N_{max}]\!]}$ is an i.i.d. sample drawn according to $h$. The distribution $h$ corresponds then to the initial sampling distribution $h_0$ of Algorithm \ref{algo:algorithm}. Both methods require exactly $N_{max}$ calls to the function $\phi$, as the proposed algorithm.

The results of the estimations of the $J=100$ expectations for the cantilever beam problem are given in Figure \ref{fig:cantilever_beam_expectations}. We can see that the ME-aISCV algorithm provides significantly better performances than the existing methods for estimating a large number of expectations, for the same reasons as in the previous example. These observations are confirmed by the numerical values in Table \ref{table:cantilever_beam_criterion_expectations}. Indeed, the criterion to minimize has been considerably reduced with the proposed algorithm compared to the existing methods.

\begin{figure}
    \centering
    \includegraphics[width=\textwidth]{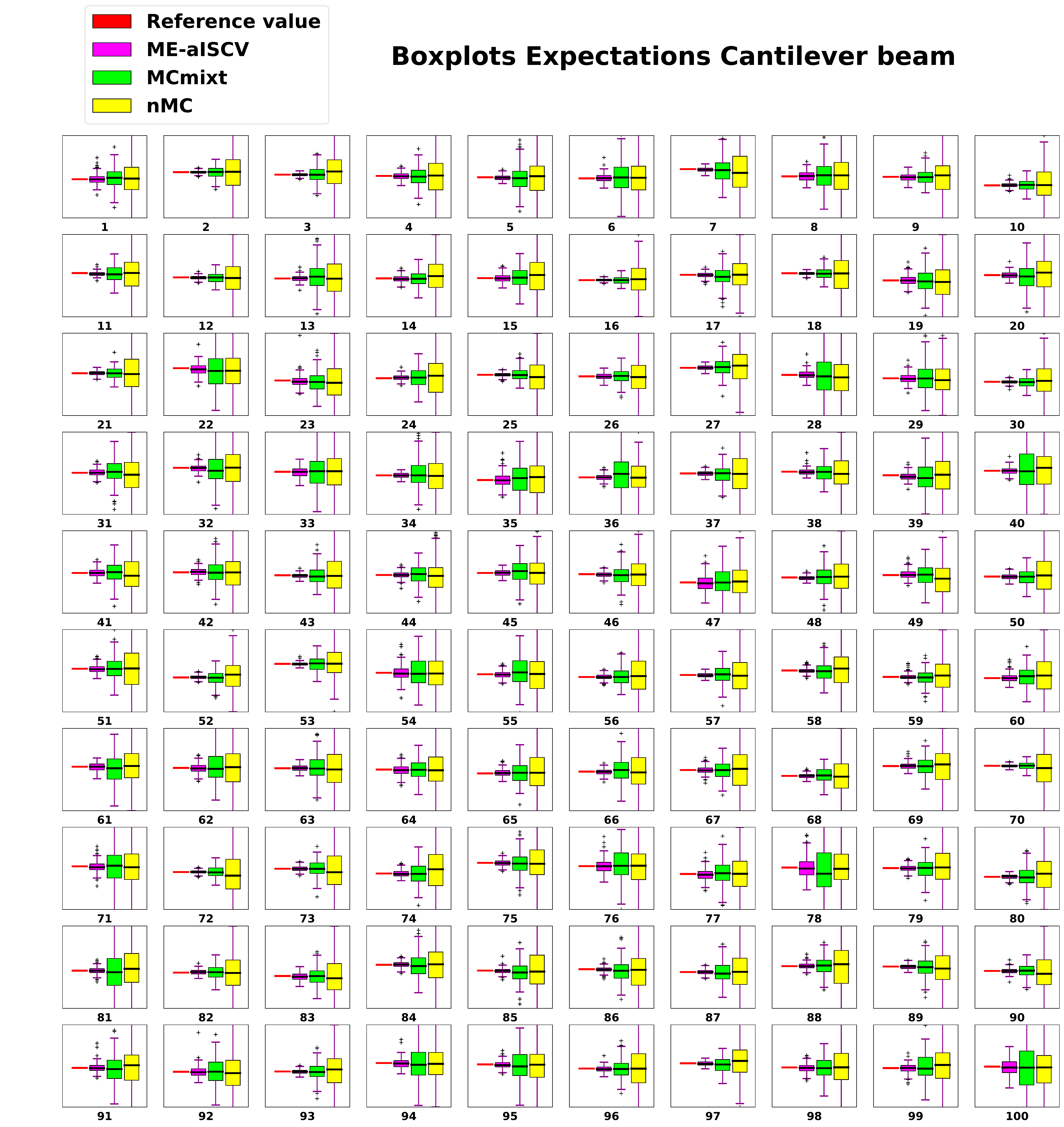}
    \caption{Estimation of the $J=100$ expectations for the cantilever beam problem.}
    \label{fig:cantilever_beam_expectations}
\end{figure}

\begin{table}[h!]
\centering
\begin{tabular}{|c|c|c|c|} 
 \hline
& nMC & MCmixt & ME-aISCV\\
 \hline
 $\sum_{j=1}^J\mathbb{V}\left(\widehat{I}_j\right)$ & $1.309\times10^{-4}$ & $6.103\times10^{-5}$ & $4.379\times10^{-6}$\\ 
\hline
\end{tabular}
\caption{Sum of the variances of the estimators of the $J=100$ expectations for all methods.}
\label{table:cantilever_beam_criterion_expectations}
\end{table}

Moreover, this example highlights a specific property of the ME-aISCV algorithm due to the choice of the criterion to minimize in Equation \eqref{eq:criterion}. One can see on Figure \ref{fig:cantilever_beam_expectations} that some expectations have more variance reduction than others, since their corresponding boxplots have a lower stretch. Indeed, due to the form of the criterion to minimize in Equation \eqref{eq:criterion}, high values of $w_j\mathbb{V}\left(\widehat{I}_j\right)$ have a more important role in the sum than lower ones. Therefore, the proposed algorithm will mainly focus on reducing before anything else the variance of the corresponding estimators, which explains the phenomenon described and observed here.

\section{Conclusion}
\label{sec:conclusion}

In the present article, we are interested in efficiently estimating multiple expectations with the same $N$-sample, a problematic encountered in some classical problems related to the study of black-box models. The criterion used to quantify the quality of the common estimation of the expectations is the weighted sum of each individual variance given in Equation \eqref{eq:criterion}. We show that there exists a family of optimal estimators combining both IS and CV, which nevertheless cannot be used in practice because they require the knowledge of the values of the expectations to estimate. Motivated by the form of these optimal estimator and some interesting properties, we suggest a new effective ME-aISCV algorithm combining both IS and CV, whose general idea is to adaptively update the IS distributions as well as the control parameters for approaching the optimal ones until a quantitative stopping criterion is reached. The main goal of this adaptive procedure is to minimize as much as possible the criterion in Equation \eqref{eq:criterion}. Then, a new independent sample drawn according to the final IS sampling distribution allows to get unbiased estimators by IS and CV of all the expectations. Finally, we illustrate and discuss the practical interest of the proposed algorithm. We first address the estimation of the even moments of the standard Gaussian distribution. Then, we show that the suggested ME-aISCV algorithm is generally applicable to sensitivity analysis, both on the input parameters and also on their uncertainty distribution. This is applied to the physical cantilever beam problem. Overall, the applications demonstrate the robustness of the algorithm to a wide range of situations. Especially, the high-order moments of the Gaussian distribution imply that the IS distributions must explore the far tails of the initial one. Furthermore, $100$ expectations are estimated simultaneously in the input-distribution-sensitivity example.

A first way of improvement of the ME-aISCV algorithm is to adaptively update the weights $\left(w_j\right)_{j\in[\![1,J]\!]}$ during the while loop in Algorithm \ref{algo:algorithm}. Indeed, it can be interesting to adjust online the importance given to each expectation or to estimate more accurately unknown target weights, such as $\left(I_j^{-2}\right)_{j\in[\![1,J]\!]}$ for example. In that latter case, the criterion in Equation \eqref{eq:criterion} is the sum of the square coefficients of variation of each estimator. Another way of improvement of this algorithm is to use non-parametric IS auxiliary distributions \cite{zhang1996nonparametric} to approach the optimal distributions $\left(g_j^*\right)_{j\in[\![1,J]\!]}$ defined at the beginning of Section \ref{sec:new_algorithm}. This method allows more flexibility and to approach more complex target distributions, but faces the curse of dimensionality. At last, the algorithm can be adapted to estimate small failure probabilities. It can be done by performing adaptive parametric IS to solve the cross-entropy problem in Equation \eqref{eq:cross-entropy_max} as in \cite{rubinstein2013cross} to approach the optimal distributions $\left(g_j^*\right)_{j\in[\![1,J]\!]}$ adapted to small failure probabilities. An interesting application of this adaptation can be found in \cite{morio2011influence} and consists in identifying the most influential parameters of the input distribution on the variability of the failure probability of the system.

Finally, a more complex application of this new method is the estimation of the Shapley effects for global sensitivity analysis with dependent input variables \cite{owen2014sobol}. Estimating each of them efficiently is a challenging task because it requires the estimation of the closed Sobol' indices for many subsets $u \subseteq [\![1,d]\!]$. A formulation of this problem as a multiple expectation estimation problem has been written in \cite{broto2020variance}, and the estimation of the Shapley effects in a reliability context by IS has been investigated in \citep{demange2022shapley}. Since the inputs are dependent, it is no longer possible to perform the estimation in the augmented space $\mathbb{X}\times\mathbb{X}$ as we did in Section \ref{ssec:sobol_estimation}. The main remaining challenge is then to find an optimal IS distribution in $\mathbb{X}$ associated to each closed Sobol' index in order to be able to apply the proposed ME-aISCV algorithm.

\section*{Acknowledgements}

The first author is enrolled in a Ph.D. program co-funded by \textit{ONERA – The French Aerospace Lab} and \textit{Toulouse III - Paul Sabatier University}. Their financial supports are gratefully acknowledged.

\section*{Appendix}

\begin{appendices}
\section{Equivalence between both optimization problem}
\label{app:optimisation_convex}
Let us prove that the optimization problem in Equation \eqref{eq:optimiation_alpha_k} is equivalent to the one in Equation \eqref{eq:optimisation_alpha_k_expectation}. Consider a sequence $\boldsymbol{\alpha}\in S_J$, a family of IS auxiliary distributions $\left(g_{\boldsymbol{\lambda}_j}\right)_{j\in[\![1,J]\!]}$, a family of control parameters $\left(\beta_j\right)_{j\in[\![1,J]\!]}\in\mathbb{R}^J$ and a family of positive weights $\left(w_j\right)_{j\in[\![1,J]\!]}\in\mathbb{R}_+^J$.

For any $j\in[\![1,J]\!]$ and any IS auxiliary distribution $h$, we have: \begin{align*}
    \mathbb{V}&_{g_{\boldsymbol{\alpha}}}\left(\dfrac{\phi_j\left(\mathbf{X}\right)f_j\left(\mathbf{X}\right)-\beta_j g_{\boldsymbol{\lambda}_j}\left(\mathbf{X}\right)}{g_{\boldsymbol{\alpha}}\left(\mathbf{X}\right)}\right) \\
    &= \mathbb{E}_{g_{\boldsymbol{\alpha}}}\left[\left(\dfrac{\phi_j\left(\mathbf{X}\right)f_j\left(\mathbf{X}\right)-\beta_j g_{\boldsymbol{\lambda}_j}\left(\mathbf{X}\right)}{g_{\boldsymbol{\alpha}}\left(\mathbf{X}\right)}\right)^2\right] - \mathbb{E}_{g_{\boldsymbol{\alpha}}}\left(\dfrac{\phi_j\left(\mathbf{X}\right)f_j\left(\mathbf{X}\right)-\beta_j g_{\boldsymbol{\lambda}_j}\left(\mathbf{X}\right)}{g_{\boldsymbol{\alpha}}\left(\mathbf{X}\right)}\right)^2\\
    &= \mathbb{E}_{g_{\boldsymbol{\alpha}}}\left[\dfrac{\left(\phi_j\left(\mathbf{X}\right)f_j\left(\mathbf{X}\right)-\beta_j g_{\boldsymbol{\lambda}_j}\left(\mathbf{X}\right)\right)^2}{g_{\boldsymbol{\alpha}}\left(\mathbf{X}\right)^2}\right] - \underbrace{\mathbb{E}_{f_j}\left(\phi_j\left(\mathbf{X}\right)-\dfrac{\beta_j g_{\boldsymbol{\lambda}_j}\left(\mathbf{X}\right)}{f_j\left(\mathbf{X}\right)}\right)^2}_{= c_j \text{ independent of } \boldsymbol{\alpha}}\\
    &= \mathbb{E}_{h}\left[\dfrac{\left(\phi_j\left(\mathbf{X}\right)f_j\left(\mathbf{X}\right)-\beta_j g_{\boldsymbol{\lambda}_j}\left(\mathbf{X}\right)\right)^2}{g_{\boldsymbol{\alpha}}\left(\mathbf{X}\right)h\left(\mathbf{X}\right)}\right] - c_j.
\end{align*} Therefore, we have: \begin{align*}
    \sum_{j=1}^J w_j\mathbb{V}&_{g_{\boldsymbol{\alpha}}}\left(\dfrac{\phi_j\left(\mathbf{X}\right)f_j\left(\mathbf{X}\right)-\beta_j g_{\boldsymbol{\lambda}_j}\left(\mathbf{X}\right)}{g_{\boldsymbol{\alpha}}\left(\mathbf{X}\right)}\right) \\
    &=\sum_{j=1}^J w_j\left(\mathbb{E}_{h}\left[\dfrac{\left(\phi_j\left(\mathbf{X}\right)f_j\left(\mathbf{X}\right)-\beta_j g_{\boldsymbol{\lambda}_j}\left(\mathbf{X}\right)\right)^2}{g_{\boldsymbol{\alpha}}\left(\mathbf{X}\right)h\left(\mathbf{X}\right)}\right] - c_j\right)\\
    &= \sum_{j=1}^J w_j\mathbb{E}_{h}\left[\dfrac{\left(\phi_j\left(\mathbf{X}\right)f_j\left(\mathbf{X}\right)-\beta_j g_{\boldsymbol{\lambda}_j}\left(\mathbf{X}\right)\right)^2}{g_{\boldsymbol{\alpha}}\left(\mathbf{X}\right)h\left(\mathbf{X}\right)}\right] - \sum_{j=1}^J w_jc_j\\
    &= \mathbb{E}_{h}\left[\dfrac{\sum_{j=1}^Jw_j\left(\phi_j\left(\mathbf{X}\right)f_j\left(\mathbf{X}\right)-\beta_j g_{\boldsymbol{\lambda}_j}\left(\mathbf{X}\right)\right)^2}{g_{\boldsymbol{\alpha}}\left(\mathbf{X}\right)h\left(\mathbf{X}\right)}\right] - \sum_{j=1}^J w_jc_j.
\end{align*} Since the term $\sum_{j=1}^J w_jc_j$ does not depend on the sequence $\boldsymbol{\alpha}$, minimizing $\sum_{j=1}^J w_j\mathbb{V}_{g_{\boldsymbol{\alpha}}}\left(\dfrac{\phi_j\left(\mathbf{X}\right)f_j\left(\mathbf{X}\right)-\beta_j g_{\boldsymbol{\lambda}_j}\left(\mathbf{X}\right)}{g_{\boldsymbol{\alpha}}\left(\mathbf{X}\right)}\right)$ w.r.t. $\boldsymbol{\alpha}$ is then equivalent to minimize $\mathbb{E}_{h}\left[\dfrac{\sum_{j=1}^Jw_j\left(\phi_j\left(\mathbf{X}\right)f_j\left(\mathbf{X}\right)-\beta_j g_{\boldsymbol{\lambda}_j}\left(\mathbf{X}\right)\right)^2}{g_{\boldsymbol{\alpha}}\left(\mathbf{X}\right)h\left(\mathbf{X}\right)}\right]$ w.r.t. $\boldsymbol{\alpha}$. As a conclusion, both optimization problems in Equations \eqref{eq:optimiation_alpha_k} and \eqref{eq:optimisation_alpha_k_expectation} are equivalent.

\end{appendices}

\bibliographystyle{unsrt}
\bibliography{mybibfile}

\end{document}